\newtheorem{theorem}{Theorem}
\newtheorem{corollary}[theorem]{Corollary}
\newtheorem{definition}[theorem]{Definition}
\newtheorem{lemma}[theorem]{Lemma}
\newtheorem{remark}[theorem]{Remark}
\newenvironment{proof}[1][Proof]{\noindent\textbf{#1.} }{\ \rule{0.5em}{0.5em}}
\begin{document}
\begin{titlepage}
\begin{flushright}
DISTA-2010 \\
\par\end{flushright}
\vskip 1.5cm
\begin{center}
\textbf{\LARGE \v Cech and de Rham Cohomology}
\textbf{ }\\
\textbf{ }\\
\textbf{\LARGE of Integral Forms}
\textbf{\vfill{}}
\textbf{\large R.Catenacci$^\ddag$, M.Debernardi, P.A.Grassi$^\ddag$, D.Matessi}
{\vfill{}
{{ DISTA, Universit\`{a} del Piemonte Orientale,
}}\\
{{ Via Teresa Michel, 11, Alessandria, 15120, Italy }}\\
{$^\ddag${INFN
- Sezione di Torino - Gruppo Collegato di Alessandria}}\\
\textbf{} }
\par\end{center}
\vfill{}
\begin{abstract}
We present a study on the integral forms and their \v Cech and de Rham cohomology.
We analyze the problem from a general perspective of sheaf theory and we explore examples
in superprojective manifolds. Integral forms are fundamental in the theory of integration in supermanifold.
One can define the integral forms introducing a new sheaf containing, among other objects,
the new basic forms $\delta(d\theta)$ where the symbol $\delta$ has the usual formal properties of
Dirac's delta distribution and acts on functions and forms as a Dirac measure. They satisfy in addition some
new relations on the sheaf. It turns out that  the enlarged sheaf of integral and "ordinary"
superforms contains also forms of "negative degree" and, moreover, due to the additional relations introduced it
is, in a non trivial way, different from the usual superform cohomology.
\end{abstract}
\vfill{}
\vspace{1.5cm}
\end{titlepage}

\section{Introduction}

Supermanifolds are rather well-known in supersymmetric theories and in string
theory. They provide a very natural ground to understand the supersymmetry and
supergravity from a geometric point of view. Indeed, a supermanifold contains
the anticommuting coordinates which are needed to construct the superfields
whose natural environment is the graded algebras \cite{Wess:1992cp,manin}.
However, the best way to understand the supermanifold is using the theory of
sheaves \cite{manin,Bruzzo}.

Before explaining the content of the present work, we stress the relevance of this 
analysis observing that recently the construction of a formulation of superstrings 
\cite{berko}
requires the introduction of the superforms described here. In addition, the 
physics behind that formalism is encoded into the BRST cohomology which, 
in mathematical terms, is translated into the \v Cech and de Rham cohomology 
objects of our study. 

In the present notes we review this approach and we use the results of our
previous paper \cite{us}. In the first section, we recall some definitions and
some auxiliary material. We point out that in order to formulate the theory of
integration for superforms, one needs some additional ingredients such as
integral forms. Enlarging the space of superforms to take into account those
new quantities results in bigger complexes of superforms. These new
mathematical objects should be understood in the language of the sheaves in
order that the previous considerations about the morphisms are applicable. In
particular, we study the behaviour of integral forms under morphisms and we
show that they can be globally defined. 

By a hand-waving argument, we can describe as follows the need of the integral
forms for the theory of integration in the supermanifold. In the theory of
integration of conventional forms for a manifold $\mathcal{M}$, we consider a
$\omega\in\Omega^{\bullet}(\mathcal{M})$. We can introduce a supermanifold
\cite{Cordes:1994fc} $\widehat{\mathcal{M}}$ whose anticommuting coordinates
are generated by the fibers $T^{\ast}\mathcal{M}$. Therefore, a function on
$\widehat{\mathcal{M}}$ is the same of a differential form of $\Omega
^{\bullet}(\mathcal{M})$, $\mathcal{F}(\widehat{\mathcal{M}})\equiv
\mathcal{C}^{\infty}(\widehat{\mathcal{M}})\cong\Omega^{\bullet}(\mathcal{M}%
)$. The correspondence is simply $dx^{i}\leftrightarrow\psi^{i}$. For the
manifold $\mathcal{M}$ we can integrate differential forms of the top degree
$\Omega^{(n)}(\mathcal{M})$, but in general we cannot integrate functions
since $\mathcal{M}$ has no natural measure. On the other hand in
$\widehat{\mathcal{M}}$ we can indeed write $\widehat{\mu}=dx^{1}%
\wedge\,...\,\wedge dx^{n}\wedge d\psi^{1}\wedge\,...\,\wedge d\psi^{n}$ where
the integral on the variables $\psi^{i}$ is the Berezin integral ($\int d\psi f(\psi)=\partial_{\psi}f(\psi)$). If $\hat{\omega
}$ is a function of $\mathcal{F}(\widehat{\mathcal{M}})$, we have
$\int_{\widehat{\mathcal{M}}}\widehat{\mu}\,\widehat{\omega}=\int
_{\mathcal{M}}\omega$ where the superspace integration is the integration of
forms. We have to notice that being the integral on the anticommuting variables
a Berezin integral, it selects automatically the degree of the form.

Now, the same construction can be performed in the case of a supermanifold
$\mathcal{N}$ with only fermionic coordinates $\theta^{a}$. In that case its
cotangent space $T^{\ast}\mathcal{N}$ is not finite dimensional. Therefore,
mimicking the above construction, we define the form integration by
considering the measure $\widehat{\mu}$ for the manifold $\mathcal{N}\oplus
T^{\ast}\mathcal{N}$ where the commuting superforms $d\theta^{a}$ are replaced
by commuting coordinates $\lambda^{a}$ and the measure is given $\widehat{\mu
}=d\theta^{a}\wedge\,...\,d\theta^{b}\wedge d\lambda^{a}\wedge\,...\,d\lambda
^{b}$. Thus, in contrapposition to the commuting case the integral over the
coordinates $\theta^{a}$ is a Berezin integral, while the integration over the
1-forms $\lambda^{a}$ is an ordinary $n$-dimensional integral. In order that
the latter has finite answer for a given superform, we introduce the
integration forms $\omega_{a_{1}\dots a_{n}}\delta(\lambda^{a_{1}}%
)\wedge\,...\,\wedge\delta(\lambda^{a_{n}})$ where the Dirac delta functions
$\delta(\lambda^{a})$ localize the integral at the point $\lambda^{a}=0$.
These new quantities behave as ``distributions'', and therefore they satisfy new relations that we will describe in Sec. 4. We show that the set of relation they 
ought to obey are preserved in passing from one patch
to another and therefore that they are global properties. This implies that the 
sheaf of integral forms is well defined. Finally, we derive a
\v{C}ech - de Rham theorem for these new superforms. The interesting aspect is
that the distributional relations (here translated into an algbebraic
language) modifies the cohomology and therefore we find non-trivial results.

In sec. 2, we review briefly the construction of the supermanifolds, the
underlying structure using ringed spaces, their morphisms and the local charts
on them. We specify the constructions to the case of superprojective manifolds.
In sec. 5 and in sec. 6 we compute some examples of \v{C}ech and de Rham cohomology groups for superprojective spaces. We also prove a generalization of usual \v{C}ech-de Rham and K\"unneth theorems. 

\section{Supermanifolds}

\label{defs}

We collect here some definitions and considerations about supermanifolds

\subsection{Definitions}

A \textbf{super-commutative} ring is a $\mathbb{Z}_{2}$-graded ring $A = A_{0}
\oplus A_{1}$ such that if $i,j \in\mathbb{Z}_{2}$, then $a_{i} a_{j} \in
A_{i+j}$ and $a_{i} a_{j} = (-1)^{i+j}a_{j}a_{i}$, where $a_{k} \in A_{k} $.
Elements in $A_{0}$ (resp. $A_{1}$) are called \textbf{even} (resp.
\textbf{odd}).

A \textbf{super-space} is a super-ringed space such that the stalks are local
super-commutative rings (Manin-Varadarajan). Since the odd elements are
nilpotent, this reduces to require that the even component reduces to a local
commutative ring.

A \textbf{super-domain} $U^{p|q}$ is the super-ringed space $\left(
U^{p},\mathcal{C}^{\infty p|q}\right)  $, where $U^{p}\subseteq\mathbb{R}^{p}$
is open and $\mathcal{C}^{\infty p|q}$ is the sheaf of super-commutative rings
given by:
\begin{equation}
V \mapsto\mathcal{C}^{\infty}\left(  V\right)  \left[  \theta^{1},\theta
^{2},...,\theta^{q}\right]  ,
\end{equation}
where $V \subseteq U^{p}$ and $\theta^{1},\theta^{2},...,\theta^{q} $ are
generators of a Grassmann algebra. The grading is the natural grading in even
and odd elements. The notation is taken from \cite{Varadarajan:2004yz} and
from the notes \cite{Varadarajan:note}.

Every element of $\mathcal{C}^{\infty p|q}\left(  V\right)  $ may be written
as $\sum_{I}f_{I}\theta^{I}$, where $I$ is a multi-index. A
\textbf{super-manifold} of dimension $p|q$ is a super-ringed space locally
isomorphic, as a ringed space, to $\mathbb{R}^{p|q}$. The coordinates $x_{i}$
of $\mathbb{R}^{p}$ are usually called the even coordinates (or bosonic),
while the coordinates $\theta^{j}$ are called the odd coordinates (or
fermionic). We will denote by $\left(  M,\mathcal{O}_{M}\right)  $ the
supermanifold whose underlying topological space is $M$ and whose sheaf of
super-commutative rings is $\mathcal{O}_{M}$.

To a section $s$ of $\mathcal{O}_{M}$ on an open set containing $x$ one may
associate the \textbf{value} of $s$ in $x$ as the unique real number
$\tilde{s}\left(  x\right)  $ such that $s-\tilde{s}\left(  x\right)  $ is not
invertible on every neighborhood of $x$. The sheaf of algebras $\overset{\sim
}{\mathcal{O}}$, whose sections are the functions $\tilde{s}$, defines the
structure of a differentiable manifold on $M$, called the \textbf{reduced
manifold} and denoted $\overset{\sim}{M}$.

\subsection{Morphisms.}

In order to understand the structure of supermanifolds it is useful to study
their morphisms. Here we describe how a morphism of supermanifolds looks like
locally. A \textbf{morphism} $\psi$ from $(X,\mathcal{O}_{X})$ to
$(Y,\mathcal{O}_{Y})$ is given by a smooth map $\overset{\sim}{\psi}$ from
$\overset{\sim}{X}$ to $\overset{\sim}{Y}$ together with a sheaf map:
\begin{equation}
\psi_{V}^{\ast}:\mathcal{O}_{Y}(V)\longrightarrow\mathcal{O}_{X}(\psi
^{-1}(V)),
\end{equation}
where $V$ is open in $Y$. The homomorphisms $\psi_{V}^{\ast}$ must commute
with the restrictions and they must be compatible with the super-ring
structure. Moreover they satisfy
\begin{equation}
\psi_{V}^{\ast}(s)^{\sim}=\tilde{s}\circ\tilde{\psi}.
\end{equation}
Let us recall some fundamental local properties of morphisms. A morphism
$\psi$ between two super-domains $U^{p|q}$ and $V^{r|s}$ is given by a smooth
map $\tilde{\psi}:U\rightarrow V$ and a homomorphism of super-algebras
\begin{equation}
\psi^{\ast}:\mathcal{C}^{\infty\,r|s}(V)\rightarrow\mathcal{C}^{\infty
\,p|q}(U).
\end{equation}
It must satisfy the following properties:
\begin{itemize}
\item If $t=(t_{1},\ldots,t_{r})$ are coordinates on $V^{r}$, each component
$t_{j}$ can also be interpreted as a section of $\mathcal{C}^{\infty\,
r|s}(V)$. If $f_{i}=\psi^{\ast}(t_{i})$, then $f_{i}$ is an even element of
the algebra $\mathcal{C}^{\infty\, p|q}(U)$.

\item The smooth map $\tilde{\psi}:U\rightarrow V$ must be $\tilde{\psi
}=(\tilde{f}_{1},\ldots,\tilde{f}_{r})$, where the $\tilde{f}_{r}$ are the
values of the even elements above.

\item If $\theta_{j}$ is a generator of $\mathcal{C}^{\infty\,r|s}(V)$, then
$g_{j}=\psi^{\ast}(\theta_{j})$ is an odd element of the algebra
$\mathcal{C}^{\infty\,p|q}(U)$.
\end{itemize}

The following fundamental theorem (see for example \cite{Varadarajan:note})
gives a local characterization of morphisms:

\begin{theorem}
{[}\textbf{Structure of morphisms}] \label{morphisms} Suppose $\phi
:U\rightarrow V$ is a smooth map and $f_{i},g_{j}$, with $i=1,\ldots,r$,
$j=1,\ldots,s$, are given elements of $\mathcal{C}^{\infty\,p|q}(U)$, with
$f_{i}$ even, $g_{j}$ odd and satisfying $\phi=(\tilde{f}_{1},\ldots,\tilde
{f}_{r})$. Then there exists a unique morphism $\psi:U^{p|q}\rightarrow
V^{r|s} $ with $\tilde{\psi}=\phi$ and $\psi^{\ast}(t_{i})=f_{i}$ and
$\psi^{\ast}(\theta_{j})=g_{j}$.
\end{theorem}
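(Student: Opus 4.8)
The plan is to construct the morphism $\psi$ explicitly on the level of structure sheaves and then check that the construction is forced, giving uniqueness. The key observation is that an arbitrary section of $\mathcal{C}^{\infty\,r|s}(V)$ can be written as $h = \sum_{J} h_J(t)\,\theta^J$ where $J$ ranges over multi-indices in $\{1,\dots,s\}$ and $h_J \in \mathcal{C}^\infty(V)$. Since any algebra homomorphism $\psi^\ast$ must be even, $\mathbb{R}$-linear, commute with restrictions, and send the generators $t_i \mapsto f_i$, $\theta_j \mapsto g_j$, the only possible definition is
\begin{equation}
\psi^\ast(h) = \sum_J (h_J \circ \phi)\big|_{\text{pulled back}}\; g^J,
\end{equation}
where $g^J = g_{j_1}\cdots g_{j_k}$ for $J = (j_1 < \dots < j_k)$ and $h_J \circ \phi$ must itself be interpreted suitably. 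This already shows uniqueness, modulo showing the formula is well-defined; the real content is existence, i.e.\ that this prescription actually yields a well-defined super-algebra homomorphism.

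**First I would** make sense of $h_J \circ \phi$ when $h_J$ is a smooth function on $V \subseteq \mathbb{R}^r$ and the $f_i$ have nilpotent (odd$\cdot$odd) corrections beyond their reduced parts $\tilde f_i$. Write $f_i = \tilde f_i + n_i$ where $n_i$ is nilpotent (it lies in the ideal generated by the $\theta$'s of $U^{p|q}$, hence $n_i^{N}=0$ for $N$ large). Then define $h_J(f_1,\dots,f_r)$ by the \emph{finite} Taylor expansion
\begin{equation}
h_J(f_1,\dots,f_r) = \sum_{|\alpha| < N} \frac{1}{\alpha!}\,(\partial^\alpha h_J)(\tilde f_1,\dots,\tilde f_r)\; n_1^{\alpha_1}\cdots n_r^{\alpha_r},
\end{equation}
which terminates because the $n_i$ are nilpotent. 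One must check this is independent of the chosen splitting and of the ordering, which follows from commutativity of the even elements $n_i$ and the standard multinomial identities; this is the ``Taylor expansion trick'' that is completely standard in the supermanifold literature.

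**Then I would** verify the three things that make $\psi$ a morphism: (i) $\psi^\ast$ is a homomorphism of super-algebras — additivity is immediate from the formula, and multiplicativity reduces to checking $\psi^\ast(h)\psi^\ast(h') = \psi^\ast(hh')$ on monomials $h_J \theta^J$, which in turn follows from the chain-rule/Leibniz behaviour of the finite Taylor expansion together with the sign rules for reordering the odd $g_j$; (ii) $\psi^\ast$ commutes with restrictions to smaller opens, which is clear since the construction is local in $V$ and the $f_i, g_j$ restrict; (iii) the compatibility condition $\psi^\ast(h)^{\sim} = \tilde h \circ \tilde\psi$, which holds because passing to reduced parts kills all the $n_i$ and all the odd generators, leaving exactly $h_J(\tilde f) $ with only the $J = \emptyset$ term surviving, i.e.\ $\widetilde{\psi^\ast(h)} = h_\emptyset(\tilde f_1,\dots,\tilde f_r) = \tilde h \circ \phi$. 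Gluing is automatic: the local morphisms on a cover agree on overlaps by the uniqueness part, so they patch to a global $\psi$.

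**The main obstacle** I expect is entirely in step (i)-(ii): making the finite Taylor expansion definition rigorous and checking it is multiplicative and independent of the splitting $f_i = \tilde f_i + n_i$. Everything else — uniqueness, restriction-compatibility, the reduced-map condition, and gluing — is essentially formal. In a write-up aimed at this paper I would state the Taylor expansion lemma, sketch why it is well-defined (nilpotency $\Rightarrow$ finiteness; commutativity $\Rightarrow$ order-independence), and then assert multiplicativity as a routine check, referring to \cite{Varadarajan:note} for the full details.
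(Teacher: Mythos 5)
The paper does not prove this theorem at all: it is quoted as a known structural result with a pointer to \cite{Varadarajan:note}, so there is no in-paper argument to compare yours against. Your sketch is the standard proof from that reference (finite Taylor expansion along the nilpotent corrections $n_i=f_i-\tilde f_i$), and the existence half is essentially right as you describe it.

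The one genuine soft spot is in your uniqueness step. You assert that because $\psi^{\ast}$ is an even algebra homomorphism sending $t_i\mapsto f_i$ and $\theta_j\mapsto g_j$, ``the only possible definition'' of $\psi^{\ast}(h)$ is the displayed formula. But $\mathcal{C}^{\infty}(V)[\theta^{1},\dots,\theta^{s}]$ is \emph{not} generated as an algebra by $t_1,\dots,t_r,\theta^{1},\dots,\theta^{s}$: a general smooth $h_J$ is not a polynomial in the $t_i$, so knowing $\psi^{\ast}$ on the coordinates does not formally determine $\psi^{\ast}(h_J)$. Pinning down $\psi^{\ast}(h_J)$ is the actual mathematical content of uniqueness, and it requires two further inputs you do not name: the compatibility condition $\psi_{V}^{\ast}(s)^{\sim}=\tilde{s}\circ\tilde{\psi}$ (which fixes $\psi^{\ast}(h_J)$ modulo the odd ideal $\mathcal{J}$ of $\mathcal{C}^{\infty\,p|q}(U)$), and Hadamard's lemma, i.e.\ the $C^{\infty}$ Taylor formula with smooth remainder $h_J(t)=\sum_{|\alpha|\le N}\frac{1}{\alpha!}(\partial^{\alpha}h_J)(\tilde f)\,(t-\tilde f)^{\alpha}+R_N$ with $R_N$ a sum of terms divisible by $(N{+}1)$-fold products of the $(t_i-\tilde f_i)$; applying $\psi^{\ast}$ and using $n_i\in\mathcal{J}^{2}$ together with $\mathcal{J}^{q+1}=0$ kills the remainder and forces $\psi^{\ast}(h_J)$ to equal your finite Taylor polynomial. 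This same lemma is what makes your existence formula well defined and multiplicative, so it should be stated explicitly rather than folded into ``the formula is forced''; with it added, the proof is complete and is exactly the argument in the cited source.
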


\subsection{Local charts on supermanifolds}

\label{smfld} We describe now how supermanifolds can be constructed by
patching local charts. Let $X=\bigcup_{i}X_{i}$ be a topological space, with
$\{X_{i}\}$ open, and let $\mathcal{O}_{i}$ be a sheaf of rings on $X_{i}$,
for each $i$. We write (see \cite{Varadarajan:2004yz}) $X_{ij}=X_{i}\cap
X_{j}$, $X_{ijk}=X_{i}\cap X_{j}\cap X_{k}$, and so on. We now introduce
isomorphisms of sheaves which represent the \textquotedblleft coordinate
changes\textquotedblright\ on our super-manifold. They allow us to glue the
single pieces to get the final supermanifold. Let
\begin{equation}
f_{ij}:\left(  X_{ji},\mathcal{O}_{j}|_{X_{ji}}\right)  \longrightarrow\left(
X_{ij},\mathcal{O}_{i}|_{X_{ij}}\right)
\end{equation}
be an isomorphisms of sheaves with
\begin{equation}
\tilde{f}_{ij}=Id.
\end{equation}
This means that these maps represent differentiable coordinate changes on the
underlying manifold.

To say that we glue the ringed spaces $(X_{i},\mathcal{O}_{i})$ through the
$f_{ij}$ means that we are constructing a sheaf of rings $\mathcal{O}$ on $X$
and for each $i$ a sheaf isomorphism
\begin{equation}
f_{i}:(X_{i},\mathcal{O}|_{X_{i}})\longrightarrow(X_{i},\mathcal{O}_{i}),
\end{equation}%
\begin{equation}
\tilde{f}_{i}=Id_{X_{i}}%
\end{equation}
such that
\begin{equation}
f_{ij}=f_{i}f_{j}^{-1},
\end{equation}
for all $i$ and $j$.

The following usual cocycle conditions are necessary and sufficient for the
existence of the sheaf $\mathcal{O}$:

\begin{description}
\item[i.] $f_{ii}=Id$ on $\mathcal{O}_{i}$;

\item[ii.] $f_{ij}f_{ji}=Id$ on $\mathcal{O}_{i}|_{X_{i}}$;

\item[iii.] $f_{ij}f_{jk}f_{ki}=Id$ on $\mathcal{O}_{i}|_{X_{ijk}}$.
\end{description}

\section{Projective superspaces}

\label{proj} Due to their importance in mathematical and physical applications
we now give a description of projective superspaces (see also \cite{us}%
). One can work either on $\mathbb{R}$ or on $\mathbb{C}$, but we choose to
stay on $\mathbb{C}$. Let $X$ be the complex projective space of dimension
$n$. The super-projective space will be called $Y$. The homogeneous
coordinates are $\left\{  z_{i}\right\}  $. Let us consider the underlying
topological space as $X$, and let us construct the sheaf of super-commutative
rings on it. For any open subset $V\subseteq X$ we denote by $V^{\prime}$ its
preimage in $\mathbb{C}^{n+1}\setminus\left\{  0\right\}  $. Then, let us
define $A\left(  V^{\prime}\right)  =H\left(  V^{\prime}\right)  \left[
\theta^{1},\theta^{2},...,\theta^{q}\right]  $, where $H\left(  V^{\prime
}\right)  $ is the algebra of holomorphic functions on $V^{\prime}$ and
$\left\{  \theta^{1},\theta^{2},...,\theta^{q}\right\}  $ are the odd
generators of a Grassmann algebra. $\mathbb{C}^{\ast}$ acts on this
super-algebra by:
\begin{equation}
t:{\sum_{I}}f_{I}\left(  z\right)  \theta^{I}\longrightarrow{\sum_{I}}%
t^{-|I|}f_{I}\left(  t^{-1}z\right)  \theta^{I}. \label{nonzero}%
\end{equation}
The super-projective space has a ring over $V$ given by:
\begin{equation}
\mathcal{O}_{Y}\left(  V\right)  =A\left(  V^{\prime}\right)  ^{\mathbb{C}%
^{\ast}}%
\end{equation}
which is the subalgebra of elements invariant by this action. This is the
formal definition of a projective superspace (see for example
\cite{Varadarajan:note}), however we would like to construct the same space
more explicitly from gluing different superdomains as in sec. \ref{smfld}.

Let $X_{i}$ be the open set where the coordinate $z_{i}$ does not vanish. Then
the super-commutative ring $\mathcal{O}_{Y}\left(  X_{i}\right)  $ is
generated by elements of the type
\begin{equation}
f_{0}\left(  \frac{z_{0}}{z_{i}}, \dots, \frac{z_{i-1}}{z_{i}}, \frac{z_{i+1}%
}{z_{i}}, \dots,\frac{z_{n}}{z_{i}}\right)  \,,\quad
\end{equation}
\begin{equation}
f_{r}\left(  \frac{z_{0}}{z_{i}},...,\frac{z_{i-1}}{z_{i}},\frac{z_{i+1}%
}{z_{i}},...,\frac{z_{n}}{z_{i}}\right)  \frac{\theta^{r}}{z_{i}}\,,\quad
r=1,\dots,q\,.
\end{equation}

In fact, to be invariant with respect to the action of $\mathbb{C}^{\ast}$,
the functions $f_{I}$ in equation (\ref{nonzero}) must be homogeneous of
degree $-|I|$. Then, it is obvious that the only coordinate we can divide by,
on $X_{i}$, is $z_{i}$: all functions $f_{I}$ are of degree $-|I|$ and
holomorphic on $X_{i}$. If we put, on $X_{i}$, for $l\neq i$, $\Xi_{l}%
^{(i)}=\frac{z_{l}}{z_{i}}$ and $\Theta^{(i)}_{r}= \frac{\theta^{r}}{z_{i}}$,
then $\mathcal{O}_{Y}\left(  X_{i}\right)  $ is generated, as a
super-commutative ring, by the objects of the form
\begin{equation}
F_{0}^{(i)}\left(  \Xi_{0}^{(i)}, \Xi_{1}^{(i)},...,\Xi_{i-1}^{(i)}, \Xi
_{i+1}^{(i)},...,\Xi_{n}^{(i)}\right)  , \quad
\end{equation}
\begin{equation}
F_{a}^{(i)}\left(  \Xi_{0}^{(i)}, \Xi_{1}^{(i)},...,\Xi_{i-1}^{(i)}, \Xi
_{i+1}^{(i)},...,\Xi_{n}^{(i)}\right)  \Theta^{(i)}_{a},
\end{equation}
where $F_{0}^{(i)}$ and the $F_{a}^{(i)}$'s are analytic functions on
$\mathbb{C}^{n}$. In order to avoid confusion we have put the index $i$ in
parenthesis: it just denotes the fact that we are defining objects over the
local chart $X_{i}$. In the following, for convenience in the notation, we
also adopt the convention that $\Xi^{(i)}_{i} = 1$ for all $i$.




We have the two sheaves $\mathcal{O}_{Y}(X_{i})|_{X_{j}}$ and $\mathcal{O}%
_{Y}(X_{j})|_{X_{i}}$. In the same way as before, we have the morphisms given
by the \textquotedblleft coordinate changes\textquotedblright. So, on
$X_{i}\cap X_{j}$, the isomorphism simply affirms the equivalence between the
objects of the super-commutative ring expressed either by the first system of
affine coordinates, or by the second one. So for instance we have that
$\Xi_{l}^{(j)}=\frac{z_{l}}{z_{j}}$ and $\Theta_{r}^{(j)}=\frac{\theta^{r}%
}{z_{j}}$ can be also expressed as
\begin{equation}
\Xi_{l}^{(j)}=\frac{\Xi_{l}^{(i)}}{\Xi_{j}^{(i)}},\quad\Theta_{r}^{(j)}%
=\frac{\Theta_{r}^{(i)}}{\Xi_{j}^{(i)}}.
\end{equation}
Which, in the language used in the previous section, means that the morphism
$\psi_{ji}$ gluing $(X_{i}\cap X_{j},\mathcal{O}_{Y}(X_{i})|_{X_{j}})$ and
$(X_{j}\cap X_{i},\mathcal{O}_{Y}(X_{j})|_{X_{i}})$ is such that $\tilde{\psi
}_{ji}$ is the usual change of coordinates map on projective space and
\begin{equation}
\psi_{ji}^{\ast}(\Xi_{l}^{(j)})=\frac{\Xi_{l}^{(i)}}{\Xi_{j}^{(i)}},\quad
\psi_{ji}^{\ast}(\Theta_{r}^{(j)})=\frac{\Theta_{r}^{(i)}}{\Xi_{j}^{(i)}}%
\end{equation}

The super-manifold is obtained by observing that the coordinate changes
satisfy the cocycle conditions of the previous section.

\section{Integral forms and integration} \label{integral:form}

Most of supergeometry can be obtained straightforwardly by extending the
commuting geometry by means of the rule of signs, but this is not the case in
the theory of differential forms on supermanifolds. Indeed the naive notion of
\textquotedblleft superforms\textquotedblright\ obtainable just by adding a
$\mathbb{Z}_{2}$ grading to the exterior algebra turns out not to be suitable
for Berezin integration. In this section we briefly recall the definition of
"integral forms" and their main properties referring mainly to \cite{mare} for
a detailed exposition. The theory of superforms and their integration theory
has been widely studied in the literature and it is based on the notion of the
integral superforms (see for example \cite{manin} \cite{Voronov2}). The
 problem is that we can build the space $\Omega^k$ of $k$-superforms out of basic 1-superforms $d\theta^{i}$ and their wedge products, however these products are necessarily commutative, since the $\theta_i$'s are odd variables. Therefore, together
 with a differential operator $d$, the spaces $\Omega^k$ form a differential complex
\begin{equation}
0\overset{d}{\longrightarrow}\Omega^{0}\overset{d}{\longrightarrow}%
\Omega^{1}\dots\overset{d}{\longrightarrow}\Omega^{n}\overset
{d}{\longrightarrow}\dots\label{comA}%
\end{equation}
which is bounded from below, but not from above. In particular there is no notion 
of a top form to be integrated on the supermanifold $\mathbb{C}^{p+1|q}$.

The space of "integral forms" is obtained by adding to the usual space of superforms a new set of 
basic forms $\delta(d\theta)$, together with its "derivatives" $\delta^n(d\theta)$, and defining a product which satisfies certain formal properties. These properties are motivated and can be deduced from the following heuristic approach. In analogy with usual distributions acting on the space of smooth functions, we think of $\delta(d\theta)$ as an operator acting on the space of superforms as the usual Dirac's delta measure. We write this as
\[ \left\langle f(d\theta) ,\delta(d\theta) \right\rangle = f(0), \]
where $f$ is a superform. This means that $\delta(d\theta)$ kills all monomials in the superform $f$ which contain the term $d\theta$. The derivatives $\delta^{(n)}(d\theta)$ satisfy
\[ \left\langle f(d\theta) ,\delta^{(n)}(d\theta) \right\rangle = 
- \left\langle f'(d\theta), \delta^{(n-1)}(d\theta) \right\rangle = (-1)^n f^{(n)}(0), \]
like the derivatives of the usual Dirac $\delta$ measure. Moreover we can consider objects 
such as $g(d\theta) \delta(d\theta)$, which act by first multiplying by $g$ then applying $\delta(d\theta)$ (in analogy with a measure of type $g(x) \delta(x)$), and so on. The wedge products among these objects satisfy some simple relations such as (we will
always omit the symbol $\wedge$ of the wedge product):
\begin{eqnarray}
&& d x^I \wedge d x^J = - d x^J \wedge d x^I\,, \quad
d x^I \wedge d \theta^j =  d \theta^j \wedge d x^I\,, \quad \nonumber \\
&&d \theta^i \wedge d \theta^j =  d \theta^j \wedge d \theta^i\,, \quad
\delta(d\theta) \wedge \delta(d\theta^{\prime})=-\delta(d\theta^{\prime}%
) \wedge \delta(d\theta), \\ 
&&d\theta\delta(d\theta)=0\,, \quad d\theta\delta^{\prime}
(d\theta)=-\delta(d\theta).\label{comB} \nonumber %
\end{eqnarray}
The second and third property can be easily deduced from the above approach. 
To prove these formulas we observe the usual transformation property of the usual Dirac's delta 
function 
\begin{equation}\label{micA}
\delta (a x + b y ) \delta (c x + d y ) = 
\frac{1}{ Det \left(
\begin{array}{cc}
a  & b \\
c  & d   
\end{array}
\right)
} \delta(x) \delta(y)
\end{equation}
for $x,y \in \mathbb{R}$. By setting $a=0, b=1, c=1$ and $d=1$, the anticommutation property of 
Dirac's delta function of $d\theta$'s of (\ref{comB}) follows. 

We do not wish here to give an exhaustive and rigorous treatment of integral forms.  As we will see later, it is suffcient for our purposes that these simple rules give a well defined construction in the case of  superprojective spaces.  A systematic exposition of these rules can be found in \cite{integ} and they
can be put in a more mathematical framework using the results of
\cite{Bruzzo}. An interesting consequence of this procedure is the existence
of "negative degree" forms, which are those which reduce the degree of forms (e.g. $\delta'(d\theta)$ has degree $-1$). The integral forms could be also called
"pseudodifferential forms".
 
We introduce also the \textit{picture number} by counting the number of delta
functions (and their derivatives) and we denote by $\Omega^{r|s}$ the
$r$-forms with picture $s$. For example, in the case of $\mathbb{C}^{p+1|q}$,
the integral form
\begin{equation}
dx^{[K_{1}}\dots dx^{K_{l}]}d\theta^{( i_{l+1}}\dots d\theta^{i_{r})%
}\delta(d\theta^{[i_{r+1}})\dots\delta(d\theta^{i_{r+s}]})\ \label{comC}%
\end{equation}
is an $r$-from with picture $s$. All indices $K_{i}$  are
antisymmetrized among themselves, while the first $r- l$ indices are symmetric and the last 
$s+1$ are antisymmetrized. We denote by $[I_{1}\dots I_{s}]$ the
antysimmetrization of the indices and by $(i_1 \dots i_n)$ the symmetrization. 
Indeed, by also adding derivatives of
delta forms $\delta^{(n)}(d\theta)$, even negative form-degree can be considered, e.g. a form of
the type:
\begin{equation}
\delta^{(n_{1})}(d\theta^{i_{1}})\dots\delta^{(n_{s})}(d\theta^{i_{s}})
\label{comCA}%
\end{equation}
is a $-(n_{1}+\dots n_{s})$-form with picture $s$. Clearly $\Omega^{k | 0}$ is
just the set $\Omega^k$ of superforms, for $k \geq 0$.

We now briefly discuss how these forms behave under change of
coordinates, i.e. under sheaf morphisms. For very general type of morphisms it is necessary to
work with infinite formal sums in $\Omega^{r|s}$ as the following example
clearly shows.

Suppose $\Phi^{\ast}(\tilde{\theta}^{1})=\theta^{1}+\theta^{2}$ ,  $\Phi
^{\ast}(\tilde{\theta}^{2})=\theta^{2}$ be the odd part of a morphism. We want
to compute%
\begin{equation}
\Phi^{\ast}(\delta\left(  d\tilde{\theta}^{1}\right)  )=\delta\left(
d\theta^{1}+d\theta^{2}\right)
\end{equation}
in terms of the above forms. 
We can formally expand in series about, for example, $d\theta^{1}:$%
\begin{equation}
\delta\left(  d\theta^{1}+d\theta^{2}\right)  =\sum_{j}\frac{\left(
d\theta^{2}\right)  ^{j}}{j!}\delta^{(j)}(d\theta^{1})
\end{equation}
Recall that any usual superform is a polynomial in the $d\theta,$ therefore only a
finite number of terms really matter in the above sum, when we apply it to a superform.  Infact, applying the formulae above, we have for example,
\begin{equation}
\left\langle (d\theta^{1})^{k},\sum_{j}\frac{\left(  d\theta^{2}\right)  ^{j}%
}{j!}\delta^{(j)}(d\theta^{1})\right\rangle =(-1)^{k}(d\theta^{2})^{k}%
\end{equation}
Notice that this is equivalent to the effect of replacing $d\theta^{1}$ with
$-d\theta^{2}.$ We could have also interchanged the role of $\theta^{1}$ and
$\theta^{2}$ and  the result would be to replace $d\theta^{2}$ with
$-d\theta^{1}.$ Both procedures correspond precisely to the action we expect
when we apply the $\delta\left(  d\theta^{1}+d\theta^{2}\right)  $ Dirac
measure. We will not enter into more detailed treatment of other types of
morphisms, as this simple example will suffice. In the case of super-projective spaces the change of 
coordinate rule is simple and will be discussed in the next section. In the rest of the paper we 
will ignore the action $\langle \, , \, \rangle$ and do the computations following 
the above rules. 

We will see later, in Section \ref{deRham}, that integral forms form a new complex as follows
\begin{equation}
\dots\overset{d}{\longrightarrow}\Omega^{(r|q)}\overset{d}{\longrightarrow
}\Omega^{(r+1|q)}\dots\overset{d}{\longrightarrow}\Omega^{(p+1|q)}\overset
{d}{\longrightarrow}0 \label{comD}%
\end{equation}
where $\Omega^{(p+1|q)}$ is the top "form" $dx^{[K_{1}}\dots dx^{K_{p+1}%
]}\delta(d\theta^{\lbrack i_{1}})\dots\delta(d\theta^{i_{q}]})$ which can be
integrated on the supermanifold. As in the usual commuting geometry, there is
an isomorphism between the cohomologies $H^{(0|0)}$ and $H^{(p+1|q)}$ on a
supermanifold of dimension $(p+1|q)$. In addition, one can define two
operations acting on the cohomology groups $H^{(r|s)}$ which change the
picture number $s$ (see \cite{integ}).

Given a function $f(x,\theta)$ on the superspace $\mathbb{C}^{(p+1|q)}$, we
define its integral by the super top-form $\omega^{(p+1|q)}=f(x,\theta
)d^{p+1}x\delta(d\theta^{1})\dots\delta(d\theta^{q})$ belonging to
$\Omega^{(p+1|q)}$ as follows
\begin{equation}
\int_{\mathbb{C}^{(p+1|q)}}\omega^{(p+1|q)}=\epsilon^{i_{1}\dots i_{q}%
}\partial_{\theta^{i_{1}}}\dots\partial_{\theta^{i_{q}}}\int_{\mathbb{C}%
^{p+1}}f(x,\theta) \label{comE}%
\end{equation}
where the last equalities is obtained by integrating on the delta functions
and selecting the bosonic top form. The remaining integrals are the usual
integral of densities and the Berezin integral. The latter can be understood
in terms of the Berezinian sheaf \cite{bere-sheaf}. It is easy to show that
indeed the measure is invariant under general coordinate changes and the
density transform as a Berezinian with the superdeterminant.

\section{\v{C}ech cohomology of $\mathbb{P}^{1|1}$}

We describe now \v{C}ech cohomology on super-projective spaces, with respect
to this particular sheaf of "integral $1$-forms".

We will begin by considering $\mathbb{P}^{1|1}$. 
$\mathbb{P}^{1}$ has a natural covering with two charts, $U_{0}$ and $U_{1}$,
where
\begin{equation}
U_{0}=\{[z_{0};z_{1}]\in\mathbb{P}^{1}:z_{0}\neq0\},
\end{equation}
\begin{equation}
U_{1}=\{[z_{0};z_{1}]\in\mathbb{P}^{1}:z_{1}\neq0\}.
\end{equation}
The affine coordinates are $\gamma=\frac{z_{1}}{z_{0}}$ on $U_{0}$ and
$\widetilde{\gamma}=\frac{z_{0}}{z_{1}}$ on $U_{1}$. The odd generators are
$\psi$ on $U_{0}$ and $\widetilde{\psi}$ on $U_{1}$. The gluing morphism of
sheaves on the intersection $U_{0}\cap U_{1}$ has pull-back given by:
\begin{equation}
\Phi^{\ast}:\mathcal{O}(U_{0}\cap U_{1})[\psi]\longmapsto\mathcal{O}(U_{0}\cap
U_{1})[\widetilde{\psi}]
\end{equation}
with the requirement that:
\begin{equation}
\Phi^{\ast}(\gamma)=\frac{1}{\widetilde{\gamma}},\Phi^{\ast}(\psi
)=\frac{\widetilde{\psi}}{\widetilde{\gamma}}.
\end{equation}

We now consider a sheaf of differential on $\mathbb{P}^{1|1}$. As we
already said in the previous section, we must add objects of the
type "$d\gamma$" and of the type "$d\psi$" on $U_{0}$. But $d\psi$
is an even generator, because $\psi$ is odd, so we are not able to
find a differential form of maximal degree. We introduce then
the generator $\delta(d\psi)$, which allow us to perform integration in the
"variable" $d\psi$. It satisfies the rule $d\psi \delta(d\psi)=0$.
This means that $\delta(d\psi)$ is like a Dirac measure on the space of the
analytic functions in $d\psi$ which gives back the evaluation at
zero. We must also introduce the derivatives $\delta^{(n)}(d\psi)$,
where $d\psi \delta^{\prime}(d\psi)=-\delta(d\psi)$, and, in
general, $d\psi\delta ^{(n)}(d\psi)=-\delta^{(n-1)}(d\psi)$. In this
way, the derivatives of the delta represent anticommuting
differential forms of negative degree.

Let's define the following sheaves of modules:
\begin{equation}
\Omega^{0|0}(U_{0})=\mathcal{O}(U_{0})[\psi];
\end{equation}
\begin{equation}
\Omega^{1|0}(U_{0})=\mathcal{O}(U_{0})[\psi]d\gamma\oplus\mathcal{O}%
(U_{0})[\psi]d\psi;
\end{equation}
and similarly un $U_{1}$. The general sheaf $\Omega^{n|0}$ is
locally made up
by objects of the form%
\begin{equation}
\mathcal{O}(U_{0})[\psi](d\gamma)^{i}(d\psi)^{j},
\end{equation}
where $i=0;1$ and $i+j=n$. The definitions on $U_{1}$ are similar,
the only difference is that we will use the corresponding
coordinates on $U_{1}$. Note that $\Omega^{n|0}$ is non zero for all integers $n \geq 0$.

We also define the sheaves of modules $\Omega^{l|1}$, which, on $U_{0}$,
contain elements of the form:%
\begin{equation}
\mathcal{O}(U_{0})[\psi](d\gamma)^{i}\delta^{(j)}(d\psi),
\end{equation}
with $i-j=l$. The elements containing $"d\psi"$ cannot appear, since
they cancel with the delta forms. On $U_{1}$, the sections of this
sheaf assume a similar structure with respect to the coordinates on
$U_{1}$.

Notice that $\Omega^{l|1}$ is non zero for all integers $l$ with $l \leq 1$, in particular for all negative integers. 
We still have to describe coordinate change in the intersection
$U_{0}\cap U_{1}$ of the objects $\{d\gamma,d\psi,\delta(d\psi)\}$.
They are given by:
\begin{equation}
\Phi^{\ast}d\widetilde{\gamma}=-\frac{1}{\gamma^{2}}d\gamma,
\end{equation}
and
\begin{equation} \label{pullpsi}
\Phi^{\ast}d\widetilde{\psi}=\frac{d\psi}{\gamma}-\frac{d\gamma \, \psi}%
{\gamma^{2}}.%
\end{equation}
More generally, for any integer $n >0$, we have the formula
\begin{equation} \label{pullpsi_n}
\Phi^{\ast}(d\widetilde{\psi})^n= \left( \frac{d\psi}{\gamma} \right)^n -\frac{d\gamma \, \psi}%
{\gamma^{2}} \left( \frac{d\psi}{\gamma} \right)^{n-1}.%
\end{equation}
It only remains to compute how $\delta(d\psi)$ transforms in a
coordinate change. We can proceed as outlined in the previous
section.

In this case, we write:
\begin{equation}
\Phi^{\ast}\delta(d\widetilde{\psi})=\delta\left(  \frac{d\psi}{\gamma}%
-\frac{d\gamma\,\psi}{\gamma^{2}}\right)
\end{equation}
Then:%
\begin{equation}
\Phi^{\ast}\delta(d\widetilde{\psi})=\gamma\delta\left(  d\psi-\frac
{d\gamma\psi}{\gamma}\right)  =\gamma\delta\left(  d\psi\right)  -\gamma
\frac{d\gamma\,\psi}{\gamma}\delta(d\psi)=\gamma\delta\left(  d\psi\right)
- \psi d\gamma \delta^{\prime}(d\psi).
\end{equation}
Notice that the latter equation, together with (\ref{pullpsi}), implies that $$\Phi^*(d \tilde{\psi} \delta(d\tilde{\psi})) = 0$$ as expected. 

Hence the generator $\delta(d\tilde{\psi})$ and its properties are well defined. Similarily, one can compute that the derivatives $\delta^n(d\tilde{\psi})$ satisfy the following change of coordinates formula
\begin{equation} \label{pulldeltan}
 \Phi^{\ast}\delta^n(d\widetilde{\psi}) = \gamma^{n+1} \delta^n\left(  d\psi\right)
- \gamma^{n} \psi \, d\gamma \, \delta^{n+1}(d\psi).
\end{equation}
Now, we can proceed in calculating sheaf cohomology for each of the
sheaves $\Omega^{i|j}$ with respect to the covering $\{ U_{0};U_{1}
\}$.
\begin{theorem}
The covering $\{U_{0};U_{1}\}$ is acyclic with respect to each of the sheaves
$\Omega^{i|j}$.
\end{theorem}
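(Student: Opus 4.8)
Recall that a covering is \emph{acyclic} for a sheaf $\mathcal{F}$ in the sense of Leray precisely when $\mathcal{F}$ has vanishing higher cohomology on every finite intersection of members of the covering. For the two-chart covering $\{U_0,U_1\}$ this means
\[
H^k(U_0,\Omega^{i|j})=H^k(U_1,\Omega^{i|j})=H^k(U_0\cap U_1,\Omega^{i|j})=0,\qquad k\geq 1,
\]
for every pair $(i,j)$. The plan is to reduce each $\Omega^{i|j}$, on each of these three open sets, to a finite direct sum of copies of the ordinary sheaf of holomorphic functions $\mathcal{O}$, and then to invoke the fact that $U_0$, $U_1$ and $U_0\cap U_1$ are one-dimensional Stein manifolds.

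First I would read off the module structure directly from the local descriptions given above. On $U_0$ the sheaf $\Omega^{n|0}$ is the free $\mathcal{O}_Y$-module generated by $(d\psi)^n$ and $d\gamma\,(d\psi)^{n-1}$ (of rank $2$ for $n\geq 1$, and of rank $1$ for $n=0$), while $\Omega^{l|1}$ is the free $\mathcal{O}_Y$-module generated by $\delta^{(-l)}(d\psi)$ and $d\gamma\,\delta^{(1-l)}(d\psi)$ (rank $2$ for $l\leq 0$, rank $1$ for $l=1$); here $\mathcal{O}_Y=\mathcal{O}[\psi]=\mathcal{O}\oplus\mathcal{O}\psi$. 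The same holds verbatim on $U_1$ in the tilded coordinates, and on the overlap $U_0\cap U_1$ we may use the restriction of the $U_0$-trivialization. Since $\mathcal{O}_Y$ is itself free of rank $2$ over $\mathcal{O}$, each $\Omega^{i|j}$ is, on each of the three open sets, a finite free $\mathcal{O}$-module, i.e. a finite direct sum $\mathcal{O}^{\oplus N}$ with $N\in\{2,4\}$.

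Because sheaf cohomology commutes with finite direct sums, this gives $H^k(U,\Omega^{i|j})\cong H^k(U,\mathcal{O})^{\oplus N}$ for $U\in\{U_0,U_1,U_0\cap U_1\}$, so it suffices to prove $H^k(U,\mathcal{O})=0$ for $k\geq 1$. As reduced manifolds one has $U_0\cong U_1\cong\mathbb{C}$ and $U_0\cap U_1\cong\mathbb{C}^{\ast}$, all of which are non-compact Riemann surfaces and hence Stein (Behnke--Stein); Cartan's Theorem B then yields the required vanishing of the higher cohomology of the coherent analytic sheaf $\mathcal{O}$ on each of them. Assembling the three vanishing statements proves that $\{U_0,U_1\}$ is acyclic for every $\Omega^{i|j}$.

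The computation itself is then immediate; I expect the only real care to lie in the bookkeeping of the module structure rather than in any cohomological input. Specifically, one must check that the listed generators are genuinely free generators over $\mathcal{O}_Y$ (treating the $\delta^{(j)}(d\psi)$ as formal free generators, as the local descriptions license) and keep track of the two boundary cases $n=0$ and $l=1$ where the rank drops from $2$ to $1$. The change-of-coordinate formula (\ref{pulldeltan}) is what guarantees that these local free-module structures patch into a globally defined sheaf, but it plays no role in the cohomology computation on each individual open set, which only sees the restriction there.
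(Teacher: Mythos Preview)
Your argument is correct and follows essentially the same route as the paper: identify $U_0,U_1\cong\mathbb{C}$ and $U_0\cap U_1\cong\mathbb{C}^{\ast}$, observe that on each of these sets $\Omega^{i|j}$ is a finite direct sum of copies of $\mathcal{O}$, and then use the vanishing of higher cohomology of $\mathcal{O}$ on these open Riemann surfaces. The paper merely asserts this vanishing, whereas you justify it via Behnke--Stein and Cartan~B and keep more careful track of the ranks and boundary cases, but the underlying argument is identical.
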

\textbf{Proof.} We know that $U_{0}$ and $U_{1}$ are both isomorphic
to $\mathbb{C}$, while $U_{0} \cap U_{1}$ is isomorphic to
$\mathbb{C}^{*}$. Moreover, we know that, classically,
$H^{q}(\mathbb{C}; \mathcal{O})=\{ 0 \}$, and that
$H^{q}(\mathbb{C}^{*}; \mathcal{O})=0$. We note that the restriction
to each open set of the sheaf $\Omega^{i|j}$ is simply the direct
sum of the sheaf $\mathcal{O}$ a certain finite number of times.

For example,
\begin{equation}
\Omega^{1|1}{(U_{0}\cap U_{1})}=\mathcal{O}(\mathbb{C}^{\ast})d\gamma
\delta(d\psi)+\mathcal{O}(\mathbb{C}^{\ast})\psi d\gamma\delta(d\psi).
\end{equation}
Note that the symbols $d\gamma\delta(d\psi)$ and $\psi d\gamma\delta(d\psi)$
represent the generators of a vector space, then, each of the direct summands
can be treated separately. So, we see that a chain of $\Omega^{i|j}$ (on
$\mathbb{C}$ or $\mathbb{C}^{\ast}$) is a cocycle if and only if each of the
summands is a cocycle, and it is a coboundary if and only if every summand is
a coboundary.

We now begin the computation of the main cohomology groups on $\mathbb{P}%
^{1|1}$. For $\check{H}^{0}$ we have the following result:

\begin{theorem} \label{h0} For integers $n \geq 0$, the following isomorphisms hold
\[ \check{H}^{0}(\mathbb{P}^{1|1}, \Omega^{n|0}) \cong \begin{cases}
                                                                                       0, \quad n> 0, \\
                                                                                       \mathbb{C}, \quad n=0.
                                                                                  \end{cases} \]   
 \[ \check{H}^{0}(\mathbb{P}^{1|1}, \Omega^{- n|1}) \cong \mathbb{C}^{4n+4},  \]
 \[ \check{H}^{0}(\mathbb{P}^{1|1}, \Omega^{1|1}) \cong 0 \]
\end{theorem}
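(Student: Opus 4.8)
The plan is to use the acyclicity theorem just proved: since $\{U_{0},U_{1}\}$ is acyclic for every $\Omega^{i|j}$, the \v{C}ech cohomology of the covering computes the sheaf cohomology, and for a covering by two open sets $\check{H}^{0}$ is just the module of global sections, i.e.\ the space of pairs $(s_{0},s_{1})$ with $s_{i}$ a section over $U_{i}$ agreeing on $U_{0}\cap U_{1}$ under the gluing $\Phi^{\ast}$. Concretely, I would write the most general $s_{0}$ over $U_{0}$ as a combination of the local generators of the sheaf with coefficients in $\mathcal{O}(U_{0})[\psi]=\mathcal{O}(\mathbb{C})[\psi]$, do the same for $s_{1}$ over $U_{1}$, apply $\Phi^{\ast}$ to $s_{1}$ (which re-expresses the $U_{1}$-objects in the $U_{0}$-chart via $\widetilde{\gamma}=1/\gamma$, $\widetilde{\psi}=\psi/\gamma$ and the form rules (\ref{pullpsi}), (\ref{pullpsi_n}), (\ref{pulldeltan}), together with $\psi^{2}=0$ and $d\gamma\,d\gamma=0$), and impose $\Phi^{\ast}(s_{1})=s_{0}$. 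Expanding each coefficient in its Taylor series turns every such identity into a finite family of linear conditions on Taylor coefficients — essentially the classical statement that a global section of $\mathcal{O}_{\mathbb{P}^{1}}(d)$ is a polynomial of degree $\le d$ — and $\check{H}^{0}$ is the solution space.

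For $\Omega^{n|0}$ with $n>0$ the local generators over $U_{0}$ are $(d\psi)^{n}$ and $d\gamma\,(d\psi)^{n-1}$, and by (\ref{pullpsi_n}) their $\Phi^{\ast}$-images carry the factors $\gamma^{-n}$ and $\gamma^{-n-1}$ respectively, while the coefficient functions pulled back from $U_{1}$ are power series in $1/\gamma$. Hence every coefficient of $\Phi^{\ast}(s_{1})$, read in the chart $U_{0}$, contains only strictly negative powers of $\gamma$; holomorphy at $\gamma=0$ forces it to vanish, so $\check{H}^{0}(\Omega^{n|0})=0$. For $n=0$ the even part gives the classical $\check{H}^{0}(\mathbb{P}^{1},\mathcal{O})=\mathbb{C}$, while the odd coefficient, which picks up a factor $1/\gamma$, must vanish, so the answer is $\mathbb{C}$. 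The sheaf $\Omega^{1|1}$ is identical in spirit: its only local generator over $U_{0}$ is $d\gamma\,\delta(d\psi)$, whose $\Phi^{\ast}$-image is $-\gamma^{-1}d\gamma\,\delta(d\psi)$ (the $\psi\,d\gamma\,\delta'(d\psi)$ piece of $\Phi^{\ast}\delta(d\widetilde{\psi})$ dies against $d\gamma\,d\gamma$), so again both coefficient functions are forced to vanish and $\check{H}^{0}(\Omega^{1|1})=0$.

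The substantive case is $\Omega^{-n|1}$, $n\ge 0$, whose local generators over $U_{0}$ are $\delta^{(n)}(d\psi)$ and $d\gamma\,\delta^{(n+1)}(d\psi)$, so that
\[
s_{0}=\big(g_{0}(\gamma)+g_{1}(\gamma)\psi\big)\delta^{(n)}(d\psi)+\big(h_{0}(\gamma)+h_{1}(\gamma)\psi\big)d\gamma\,\delta^{(n+1)}(d\psi),
\]
with four entire coefficient functions, and similarly for $s_{1}$ with $\widetilde{g}_{0},\widetilde{g}_{1},\widetilde{h}_{0},\widetilde{h}_{1}$. Using (\ref{pulldeltan}) — $\delta^{(n)}(d\widetilde{\psi})$ pulls back to $\gamma^{n+1}\delta^{(n)}(d\psi)-\gamma^{n}\psi\,d\gamma\,\delta^{(n+1)}(d\psi)$ and $d\widetilde{\gamma}\,\delta^{(n+1)}(d\widetilde{\psi})$ pulls back to $-\gamma^{n}d\gamma\,\delta^{(n+1)}(d\psi)$ — and matching coefficients after $\widetilde{\gamma}=1/\gamma$ gives
\[
g_{0}(\gamma)=\gamma^{n+1}\widetilde{g}_{0}(1/\gamma),\qquad g_{1}(\gamma)=\gamma^{n}\widetilde{g}_{1}(1/\gamma),\qquad h_{0}(\gamma)=-\gamma^{n}\widetilde{h}_{0}(1/\gamma),
\]
\[
h_{1}(\gamma)=-\gamma^{n}\widetilde{g}_{0}(1/\gamma)-\gamma^{n-1}\widetilde{h}_{1}(1/\gamma).
\]
Requiring both sides of each equation to be entire forces $\widetilde{g}_{0}$ to be a polynomial of degree $\le n+1$, $\widetilde{g}_{1}$ and $\widetilde{h}_{0}$ of degree $\le n$, and $\widetilde{h}_{1}$ of degree $\le n$ with its top coefficient pinned to (minus) the top coefficient of $\widetilde{g}_{0}$ by the mixing term in the last relation; this is $(n+2)+(n+1)+(n+1)+n=4n+4$ free parameters, so $\check{H}^{0}(\Omega^{-n|1})\cong\mathbb{C}^{4n+4}$ (in particular $\mathbb{C}^{4}$ for $\Omega^{0|1}$).

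The only place where care is needed is the dimension count for $\Omega^{-n|1}$: one must track precisely which Laurent monomials are killed by holomorphy on the two charts, and observe the single linear constraint coming from the off-diagonal piece of (\ref{pulldeltan}) that sends $\delta^{(n)}(d\psi)$ partly into $d\gamma\,\delta^{(n+1)}(d\psi)$ (dropping that term would still give $4n+4$, but distributed differently among the four coefficient polynomials). Everything else reduces to the elementary "degree $\le d$'' computation on $\mathbb{P}^{1}$, carried out in parallel for the four coefficient functions with the one coupling just described. A useful consistency check is that performing the matching in the opposite direction (expressing $s_{0}$ in the $U_{1}$-chart) yields the same counts, as it must by the $\gamma\leftrightarrow 1/\gamma$ symmetry of $\mathbb{P}^{1}$.
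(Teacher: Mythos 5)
Your proposal is correct and follows essentially the same route as the paper: identify $\check{H}^{0}$ with pairs of local sections matching under $\Phi^{\ast}$, use the pullback formulas (\ref{pullpsi_n}) and (\ref{pulldeltan}) to read off which Laurent monomials survive holomorphy on both charts, and count. Your dimension count for $\Omega^{-n|1}$, including the single linear constraint tying the top coefficient of the $\psi\,d\gamma\,\delta^{(n+1)}(d\psi)$ coefficient to that of the $\delta^{(n)}(d\psi)$ coefficient, reproduces exactly the paper's $(n+2)+3(n+1)-1=4n+4$.
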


\begin{proof}

\begin{itemize}
\item Let's begin from $\check{H}^{0}(\mathbb{P}^{1|1}, \Omega^{0|0})$. On
$U_{1}$, the sections of the sheaf have the structure:%
\begin{equation}
f(\widetilde{\gamma})+f_{1}(\widetilde{\gamma})\widetilde{\psi}.
\end{equation}

On the intersection $U_{0}\cap U_{1}$ they transform in the following way:%
\begin{equation}
f\left(  \frac{1}{\gamma}\right)  +\frac{\psi}{\gamma}f_{1}\left(  \frac
{1}{\gamma}\right)
\end{equation}

So, the only globally defined sections (i.e which can be extended
also on
$\mathbb{P}^{1|1}$) are the constants:%
\begin{equation}
\check{H}^{0}(\mathbb{P}^{1|1},\Omega^{0|0}) \cong \mathbb{C}.
\end{equation}

\item Let's consider $\check{H}^{0}(\mathbb{P}^{1|1}, \Omega^{n|0})$, with $n > 0$. On
$U_{1}$, the sections of the sheaf have the structure:%
\begin{equation} \label{forms_n0}
\left( f_{0}(\widetilde{\gamma}) +f_{1}(\widetilde{\gamma
})\widetilde{\psi} \right) \, d\widetilde{\gamma} \, (d \widetilde{\psi})^{n-1} + \left( f_{2}(\widetilde{\gamma}) +f_{3}(\widetilde{\gamma})\widetilde{\psi} \right) \,  (d\widetilde{\psi})^n.%
\end{equation}

Since both $d\widetilde{\gamma}$ and $d\widetilde{\psi}$ transform, by
coordinate change, producing a term ${1/}{\gamma^{2}}$, none of these sections
can be extended on the whole $\mathbb{P}^{1|1}$, except the zero section. So,
\begin{equation}
\check{H}^{0}(\mathbb{P}^{1|1}, \Omega^{n|0}) \cong 0.
\end{equation}

\item Let us now compute $\check{H}^{0}(\mathbb{P}^{1|1};\Omega^{-n|1})$ for every 
integer $n \geq 0$. On $U_1$, the sections of the sheaf have the form:
 \begin{equation} \label{forms_n1}
\left( f_{0}(\widetilde{\gamma}) +f_{1}(\widetilde{\gamma
})\widetilde{\psi} \right) \, \delta^n(d\widetilde{\psi})+ \left( f_{2}(\widetilde{\gamma
}) +f_{3}(\widetilde {\gamma}) \widetilde{\psi} \right) \, d\widetilde{\gamma} \, \delta^{n+1}(d\widetilde{\psi
}). 
\end{equation}
Using the change of coordinates formula (\ref{pulldeltan}) one can verify that 
on the intersection $U_{0}\cap U_{1}$ they transform in the following way:%
\begin{align}
& \left( f_{0}\left(  \frac{1}{\gamma}\right)   + f_{1}\left(  \frac{1}{\gamma
}\right)  \frac{\psi}{\gamma} \right) \left(  \gamma^{n+1} \delta^{n}\left(  d\psi\right)
- \gamma^{n} \psi \, d\gamma \, \delta^{n+1}(d\psi) \right)  - \nonumber \\
- & \left( f_{2}\left(  \frac{1}{\gamma}\right)  +f_{3}\left(  \frac{1}{\gamma}\right)  \frac{\psi}{\gamma} \right) \frac{d \gamma}%
{\gamma^{2}} \left(  \gamma^{n+2} \delta^{n+1}\left(  d\psi\right)
- \gamma^{n+1} \psi \, d\gamma \, \delta^{n+2}(d\psi) \right)  = \nonumber \\
= & \left( f_{0}\left(  \frac{1}{\gamma}\right) \gamma^{n+1}  + f_{1}\left(  \frac{1}{\gamma
}\right)  \gamma^{n} \psi  \right)    \delta^{n}\left(  d\psi \right) - \nonumber \\
- &  \left( f_{2} \left( \frac{1}{\gamma}\right) \gamma^n  + \left(  f_{0} \left(  \frac{1}{\gamma} \right) \gamma^n  + f_{3}\left(  \frac{1}{\gamma}\right) \gamma^{n-1} \right)  \psi  \right) d \gamma \, \delta^{n+1}\left(  d\psi \right)
\end{align}
Therefore this expression extends to a global section 
if and only if  the following conditions hold. The coefficient $f_0$ is a polynomial of degree $n+1$, while $f_1$, $f_2$ and $f_3$ are polynomials of degree $n$. Moreover, if $a_{n+1}$ and $b_{n}$ are the coefficients of maximal degree in $f_0$ and $f_3$ respectively, then $a_{n+1} = -b_{n}$. This establishes that $\check{H}^{0}(\mathbb{P}^{1|1}, \Omega^{-n|1})$ has dimension $4n+4$.

\item Let's consider $\check{H}^{0}(\mathbb{P}^{1|1};\Omega^{1|1})$. On
$U_{1}$, the sections of the sheaf have the structure:%
\begin{equation}
\left( f_{0}(\widetilde{\gamma}) +f_{1}(\widetilde{\gamma})\widetilde{\psi} \right) d\widetilde{\gamma}\delta
(d\widetilde{\psi})
\end{equation}

These sections cannot be defined on the whole $\mathbb{P}^{1}$, since they
transform as:%
\begin{eqnarray*}
 - \left( f_{0}\left(  \frac{1}{\gamma}\right) + f_{1}\left(  \frac{1}{\gamma}\right)  \frac{\psi}{\gamma} \right) %
\frac{d\gamma}{\gamma^{2}} \left( \gamma\delta\left(  d\psi\right)
- \psi d\gamma \delta^{\prime}(d\psi) \right) & = & \  \\ 
= - \left( f_{0}\left(  \frac{1}{\gamma}\right) \frac{1}{\gamma} + f_{1}\left(  \frac{1}{\gamma}\right)  \frac{\psi}{\gamma^2} \right) d\gamma \, \delta\left(  d\psi\right). & \ & \ 
\end{eqnarray*}
So,%
\begin{equation}
\check{H}^{0}(\mathbb{P}^{1|1},\Omega^{1|1})= 0 .
\end{equation}

\end{itemize}
\end{proof}

A similar computation can be done to obtain the groups $\check{H}%
^{1}(\mathbb{P}^{1|1};\Omega^{i|j})$. The elements of the \v{C}ech cohomology
are sections $\sigma_{01}$ of $\Omega_{|U_{0}\cap U_{1}}^{i|j}$ which cannot
be written as differences $\sigma_{0}-\sigma_{1}$, with $\sigma_{0}$ defined
on $U_{0}$ and $\sigma_{1}$ defined on $U_{1}$. We have the following result:

\begin{theorem} For integers $n \geq 0$, the following isomorphisms hold
\[ \check{H}^{1}(\mathbb{P}^{1|1}, \Omega^{n |0}) \cong \mathbb{C}^{4n}
                                                                      \]   
 \[ \check{H}^{1}(\mathbb{P}^{1|1}, \Omega^{- n|1}) \cong 0,  \]
 \[ \check{H}^{1}(\mathbb{P}^{1|1}, \Omega^{1|1}) \cong \mathbb{C} \]
\end{theorem}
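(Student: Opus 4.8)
The plan is to mimic the structure of the proof of Theorem \ref{h0}, exploiting the acyclicity of the covering $\{U_0,U_1\}$ established above, so that $\check H^1$ is computed as the cokernel of the \v Cech differential $\sigma_0 - \sigma_1 \mapsto \sigma_{01}$ acting on sections over $U_0 \cap U_1$. Since each sheaf $\Omega^{i|j}$ restricted to an open set is a finite direct sum of copies of $\mathcal{O}$, and since the monomials in $d\gamma, d\psi, \delta^{(j)}(d\psi)$ (times $1$ or $\psi$) form a basis over $\mathcal{O}$, the computation decouples summand by summand exactly as in the acyclicity theorem. The key inputs are the transition formulas \eqref{pullpsi}, \eqref{pullpsi_n}, \eqref{pulldeltan} together with $\Phi^\ast d\tilde\gamma = -\gamma^{-2} d\gamma$, and the classical fact that on $\mathbb{C}^\ast$ every Laurent series is cohomologous (via the \v Cech differential for $\{ \mathbb{C}, \mathbb{C}\}$ glued to $\mathbb{C}^\ast$) to its principal part, i.e. $\check H^1$ of a single copy of $\mathcal{O}$ is the space of negative-degree Laurent tails, with the precise truncation determined by which powers of $\gamma$ can be absorbed into $\sigma_0$ (holomorphic in $\gamma$) versus $\sigma_1$ (holomorphic in $\tilde\gamma = 1/\gamma$).

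First I would treat $\Omega^{n|0}$ for $n \geq 0$. A section over $U_0 \cap U_1$ is $(g_0 + g_1\psi)\, d\gamma\, (d\psi)^{n-1} + (g_2 + g_3\psi)\,(d\psi)^n$ with $g_i \in \mathcal{O}(\mathbb{C}^\ast)$ (for $n=0$ only the second block, without $d\gamma$, survives — I would note this matches $\mathbb{C}^{4\cdot 0}=0$ from Theorem \ref{h0}'s complementary count, but here the $n=0$ case is $\check H^1(\Omega^{0|0})=0$, which I should check separately or absorb into the general formula). Using \eqref{pullpsi_n} and $\Phi^\ast d\tilde\gamma$, I would rewrite a section pulled back from $U_1$ in the $U_0$-basis; the monomials $d\gamma(d\psi)^{n-1}$ and $(d\psi)^n$ each acquire explicit powers of $\gamma$ (namely $\gamma^{-(n+1)}$ and $\gamma^{-n}$ type shifts, with the $\psi$-components mixing because $\Phi^\ast d\tilde\psi$ has a $d\gamma\,\psi/\gamma^2$ cross term). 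Then $\check H^1$ of each of the four $\mathcal{O}$-summands is the quotient of Laurent series by (holomorphic functions of $\gamma$) $+$ (holomorphic functions of $\gamma$ times the relevant power of $\gamma$ coming from the transition), which yields a finite-dimensional space; summing the four contributions should give exactly $\mathbb{C}^{4n}$. The cross term between $g_0$ and $g_2$-type coefficients (analogous to the $a_{n+1}=-b_n$ relation in Theorem \ref{h0}) must be tracked carefully, as it is precisely what trims the naive count.

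Next I would do $\Omega^{-n|1}$ for $n\geq 0$. A section over the intersection has the form in \eqref{forms_n1} and the displayed transformation in the proof of Theorem \ref{h0} already tells me how the four coefficient functions transform; I would read off that the obstruction to writing such a section as $\sigma_0-\sigma_1$ is empty because the powers of $\gamma$ appearing ($\gamma^{n+1}, \gamma^n, \gamma^{n-1}$, all nonnegative for $n\geq 0$) allow every Laurent tail to be split — the negative part going to $\sigma_1$ and everything else to $\sigma_0$ — so $\check H^1(\Omega^{-n|1})=0$. For $\Omega^{1|1}$, a section over $U_0\cap U_1$ is $(g_0+g_1\psi)\,d\gamma\,\delta(d\psi)$; using the transformation displayed at the end of the proof of Theorem \ref{h0}, the single monomial $d\gamma\,\delta(d\psi)$ picks up a factor $\gamma^{-1}$ (with the $g_1$-component getting $\gamma^{-2}$ and a $\psi$), and $\check H^1$ of this one copy of $\mathcal{O}$ relative to that shift is one-dimensional (spanned by, e.g., the class of $\gamma^{-1}d\gamma\,\delta(d\psi)$), giving $\mathbb{C}$; the $\psi$-summand contributes $0$ because its shift $\gamma^{-2}$ still lets the whole Laurent series split.

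The main obstacle I anticipate is bookkeeping the $\psi$-mixing: because $\Phi^\ast d\tilde\psi$ and $\Phi^\ast \delta^{(n)}(d\tilde\psi)$ each contain a term proportional to $d\gamma\,\psi$, the pullback of a section is not diagonal in the $(1,\psi)$ grading, so the four scalar summands are not independent — the $\psi$-component of one basis monomial feeds into the $\psi$-component of another. I would handle this by writing the transition as a block-triangular matrix over $\mathcal{O}(\mathbb{C}^\ast)$ in an ordered basis (pure-$d\gamma$ terms, then $\delta$- or $(d\psi)$-power terms, with $\psi$-weight as a secondary grading) so that the cokernel computation can be done one block at a time, the off-diagonal entries only shifting representatives without changing dimensions. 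Verifying that the off-diagonal contributions genuinely do not alter the dimension counts — and in particular reproduce the single linear relation that cuts $\mathbb{C}^{4n+4}$-type naive answers down to $\mathbb{C}^{4n}$ for $\check H^1(\Omega^{n|0})$ — is the one genuinely delicate point; everything else is the standard $\mathbb{C}^\ast$ Laurent-tail argument applied summand by summand.
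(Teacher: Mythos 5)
Your overall strategy is exactly the paper's: use acyclicity of $\{U_0,U_1\}$, write a general section on $U_0\cap U_1$ in the $U_0$-basis of monomials, pull back the $U_1$-side via \eqref{pullpsi_n}, \eqref{pulldeltan} and $\Phi^\ast d\tilde\gamma=-\gamma^{-2}d\gamma$, and count which Laurent tails cannot be absorbed into $\sigma_0-\sigma_1$, keeping track of the $\psi$-mixing cross terms. The paper does precisely this, summand by summand, and your treatment of $\Omega^{-n|1}$ (all exponents $\gamma^{n+1},\gamma^n,\gamma^{n-1}$ nonnegative, hence every tail splits and $\check H^1=0$) matches the paper's.

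However, two of your detailed counting claims are inverted, and one of them would propagate into wrong dimensions if applied consistently. First, in the $\Omega^{1|1}$ case you assign the one-dimensional contribution to the summand whose transition factor is $\gamma^{-1}$ and claim the $\gamma^{-2}$-shifted $\psi$-summand splits completely. The rule is the opposite: a summand where the $U_1$-side contributes $g(\gamma^{-1})\gamma^{-k}$ reaches exactly the powers $\leq -k$, while the $U_0$-side reaches the powers $\geq 0$, so the missing powers are $\gamma^{-1},\dots,\gamma^{-(k-1)}$, i.e.\ $k-1$ dimensions. Hence the $\gamma^{-1}$-summand contributes $0$ and the $\psi$-summand (shift $\gamma^{-2}$) contributes the $\mathbb{C}$, generated by $\psi\,d\gamma\,\delta(d\psi)/\gamma$ as in \eqref{generatore} --- not by $\gamma^{-1}d\gamma\,\delta(d\psi)$. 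With your stated rule the row-3 summand of $\Omega^{n|0}$ (shift $\gamma^{-n}$) would give $n$ rather than the correct $n-1$, and the total would come out wrong. Second, the cross-term constraint in the $\Omega^{n|0}$ computation does not ``trim a naive $4n+4$ down to $4n$'': the naive summand-by-summand count is $n+n+(n-1)+n=4n-1$, and the fact that the single function $g_2$ appears in both the $\psi\,d\gamma\,(d\psi)^{n-1}$ row (as $\gamma^{-(n+1)}$) and the $(d\psi)^n$ row (as $\gamma^{-n}$) means those two coefficients of a coboundary are tied together, so one \emph{additional} class survives, raising the count to $4n$. The constraint enlarges the cohomology rather than cutting it, because it restricts the image of the \v Cech differential, not the space of cocycles. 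Both slips are local and fixable, but as written the counting rule you propose is not the one that yields the stated answers.
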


\begin{proof}

\begin{itemize}
\item $\check{H}^{1}(\mathbb{P}^{1|1}, \Omega^{0|0})=\{0\}$, since for every
section on $U_{1}\cap U_{0}$ we have the structure:
\begin{equation}
f(\widetilde{\gamma})+f_{1}(\widetilde{\gamma})\widetilde{\psi},
\end{equation}
we can decompose the Laurent series of $f$ and $f_{1}$ in a singular
part and in a holomorphic component. The singular part is defined on
$U_{0}$, while the holomorphic part is defined on $U_{1}$. So, it's
easy to write every section of $\Omega^{0|0}$ on $U_{0}\cap U_{1}$
as a difference of sections on $U_{0}$ and $U_{1}$.

\item We now compute $\check{H}^{1}(\mathbb{P}^{1|1}, \Omega^{n|0})$ for $n>0$.
A section on $U_0$ is of the type
\[ \left( f_{0}(\gamma) +f_{1}(\gamma) \psi \right) \, d \gamma \, (d \psi)^{n-1} + \left( f_{2}(\gamma) +f_{3}(\gamma) \psi \right) \,  (d\psi)^n.
 \]
 while a section on $U_1$ is of the type
\[ \left( g_{0}(\widetilde{\gamma}) +g_{1}(\widetilde{\gamma
})\widetilde{\psi} \right) \, d\widetilde{\gamma} \, (d \widetilde{\psi})^{n-1} + \left( g_{2}(\widetilde{\gamma}) +g_{3}(\widetilde{\gamma})\widetilde{\psi} \right) \,  (d\widetilde{\psi})^n.%
 \]
All functions here are regular. A computation shows that, taking the difference of the two on $U_0 \cap U_1$ and expressing everything in the coordinates $\gamma$ and $\psi$, gives us an expression of the type
\begin{align}
 & \left( f_0(\gamma) + g_0(\gamma^{-1}) \gamma^{-(n+1)} \right) \, d \gamma \, (d \psi)^{n-1} +  \nonumber \\
 + & \left( f_1(\gamma) + g_1(\gamma^{-1}) \gamma^{-(n+2)} + g_2(\gamma^{-1}) \gamma^{-(n+1)} \right) \, \psi d \gamma \, (d \psi)^{n-1} +  \nonumber \\
 + & \left( f_2(\gamma) - g_2(\gamma^{-1}) \gamma^{-n} \right) \, (d \psi)^{n} + \nonumber \\
 + & \left( f_3(\gamma) - g_3(\gamma^{-1}) \gamma^{-(n+1)} \right) \, \psi (d \psi)^{n}. \nonumber 
\end{align}
It is clear that in the first row there are no terms of the type $a_k \gamma^{-k}$ with $1 \leq k \leq n$, so this gives us $n$ parameters for an element of $\check{H}^{1}(\mathbb{P}^{1|1}, \Omega^{n|0})$. Similarily, the second row gives us $n$ parameters, the third gives us $n-1$ and the fourth $n$. This gives a total of $4n-1$. Notice now that in the above expression the coefficient of $\gamma^{-(n+1)}$ in the second row must be equal to the coefficient of $\gamma^{-n}$ in the third row. This constraint on the terms of the above type gives us room for an extra parameter in the elements of $\check{H}^{1}(\mathbb{P}^{1|1}, \Omega^{n|0})$. We therefore have a total of $4n$ parameters. 

\item We compute in a similar way $\check{H}^{1}(\mathbb{P}^{1|1} , \Omega^{-n|1})$ for $n \geq 0$. A computation shows that a difference between a section on $U_0$ and a section on $U_1$ is of the type 
\begin{align}
 & \left( f_0(\gamma) - g_0(\gamma^{-1}) \gamma^{n+1} \right) \, \delta^{n}(d \psi) +  \nonumber \\
 + & \left( f_1(\gamma) - g_1(\gamma^{-1}) \gamma^{n} \right) \, \psi \delta^{n}(d \psi) +  \nonumber \\
 + & \left( f_2(\gamma) + g_2(\gamma^{-1}) \gamma^{n} \right) \, d \gamma \, \delta^{n+1}(d \psi)  + \nonumber \\
 + & \left( f_3(\gamma) + g_0(\gamma^{-1}) \gamma^n +  g_3(\gamma^{-1}) \gamma^{n-1} \right) \, \psi d\gamma \, \delta^{n+1}(d \psi). \nonumber 
\end{align}
It is clear that every section on $U_0 \cap U_1$ is represented in such an expression. Therefore we have $\check{H}^{1}(\mathbb{P}^{1|1} , \Omega^{-n|1}) = 0$

\item We see in a similar way that 
$\check{H}^{1}(\mathbb{P}^{1|1};\Omega^{1|1})=\mathbb{C}$, in fact the section on $U_0 \cap U_1$ which are 
not differences are all generated by 
\begin{equation}  \label{generatore}
\frac{\psi
d\gamma\delta(d\psi)}{\gamma}. 
\end{equation}
\end{itemize}
This completes the proof. \end{proof}

Notice that $\check{H}^{1}(\mathbb{P}^{1|1}, \Omega^{n+1 |0})$ and $\check{H}^{0}(\mathbb{P}^{1|1}, \Omega^{- n|1})$ have the same dimension. There is an interesting explanation of this fact, in fact we can construct a pairing 
\[ \check{H}^{1}(\mathbb{P}^{1|1}, \Omega^{n+1 |0}) \times \check{H}^{0}(\mathbb{P}^{1|1}, \Omega^{- n|1}) \rightarrow  
           \check{H}^{1}(\mathbb{P}^{1|1}, \Omega^{1|1}) \cong \mathbb{C}\]
as follows. As explained above, an element of $\check{H}^{1}(\mathbb{P}^{1|1}, \Omega^{n+1 |0})$ is of the type 
\begin{equation} \label{term1}
\left( f_{0}(\gamma^{-1}) +f_{1}(\gamma^{-1}) \psi \right) \, d \gamma \, (d \psi)^{n} + \left( f_{2}(\gamma^{-1}) +f_{3}(\gamma^{-1}) \psi \right) \,  (d\psi)^{n+1}.
 \end{equation}
where $f_0$ and $f_1$ are polynomials of degree at most $n+1$, while $f_1$ and $f_2$ can be chosen to be respectively of degree at most $n+2$ and $n$ or both of degree at most $n+1$.  An element of $\check{H}^{0}(\mathbb{P}^{1|1}, \Omega^{- n|1})$ is of the type 

\begin{equation} \label{term2}
\left( g_{0}(\gamma) +g_{1}(\gamma) \psi \right) \, \delta^n(d\psi)+ \left( g_{2}(\gamma) +g_{3}(\gamma) \psi \right) \, d\gamma \, \delta^{n+1}(d\psi), 
\end{equation}

where $g_{0}$ is a polynomial of degree $n+1$, $g_1 \ldots, g_3$ are polynomials of degree $n$ and the coefficients of maximal degree in $g_0$ and $g_3$ are opposite to each other.  Now recall that we have a pairing
\[ \Omega^{n+1 | 0} \times \Omega^{ - n | 1} \rightarrow  \Omega^{ 1 | 1}\]
obeying the rules explained in Section \ref{integral:form}. For instance 
\[ \langle d \gamma \, (d \psi)^{n} ,  \delta^n(d\psi) \rangle = (-1)^n n! \, d \gamma \, \delta(d\psi), \]
\[ \langle  (d \psi)^{n+1} ,  d \gamma \,\delta^{n+1}(d\psi) \rangle  = - (-1)^n (n+1)! \, d \gamma \, \delta(d\psi), \]
\[ \langle d \gamma \, (d \psi)^{n} , d \gamma \,\delta^{n+1}(d\psi) \rangle = \langle  (d \psi)^{n+1} , \delta^n(d\psi) \rangle = 0.\]
It can be checked that this product descends to a pairing in cohomology. We have the following 
\begin{lemma}
On $\mathbb{P}^{1|1}$ the above product in cohomology is non-degenerate. 
\end{lemma}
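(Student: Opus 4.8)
The plan is to compute the pairing explicitly on the two finite-dimensional spaces described by (\ref{term1}) and (\ref{term2}), and to exhibit a basis in which the induced bilinear form has a nonzero determinant, i.e.\ is given (up to a relabeling of basis vectors) by an anti-diagonal matrix with nonzero entries. First I would fix convenient bases. By the preceding theorem, $\check{H}^{1}(\mathbb{P}^{1|1},\Omega^{n+1|0})$ has dimension $4(n+1)=4n+4$ and $\check{H}^{0}(\mathbb{P}^{1|1},\Omega^{-n|1})$ has dimension $4n+4$, so a nondegenerate pairing between them is at least numerically possible. For the first space I would use the representatives (\ref{term1}): the negative Laurent tails in the coefficient functions $f_0,\dots,f_3$ of the cocycle on $U_0\cap U_1$, subject to the single linear constraint identified in the previous proof (coefficient of $\gamma^{-(n+1)}$ in the second row equals the coefficient of $\gamma^{-n}$ in the third). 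For the second space I would use the polynomial representatives (\ref{term2}) with $g_0$ of degree $n+1$, $g_1,g_2,g_3$ of degree $n$, and the constraint that the top coefficients of $g_0$ and $g_3$ are opposite.

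Next I would carry out the product. Using the bilinear rules recalled just before the lemma, namely
\[
\langle d\gamma\,(d\psi)^n,\ \delta^n(d\psi)\rangle=(-1)^n n!\,d\gamma\,\delta(d\psi),\qquad
\langle (d\psi)^{n+1},\ d\gamma\,\delta^{n+1}(d\psi)\rangle=-(-1)^n (n+1)!\,d\gamma\,\delta(d\psi),
\]
with the other two cross terms vanishing, the product of a representative of the form (\ref{term1}) with one of the form (\ref{term2}) produces an element of $\Omega^{1|1}$ on $U_0\cap U_1$ whose class in $\check{H}^{1}(\mathbb{P}^{1|1},\Omega^{1|1})\cong\mathbb{C}$ is read off as the coefficient of the generator (\ref{generatore}), i.e.\ of $\psi\,d\gamma\,\delta(d\psi)/\gamma$. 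Concretely, multiplying $f_i(\gamma^{-1})$ against $g_j(\gamma)$ and extracting the $\gamma^{-1}\psi\,d\gamma\,\delta(d\psi)$ component turns the pairing into a sum of residue-type contractions $\sum_k (\text{coeff of }\gamma^{-k}\text{ in }f_i)\cdot(\text{coeff of }\gamma^{k-1}\text{ in }g_j)$, weighted by the factorials $\pm(-1)^n n!$ and $\pm(-1)^n(n+1)!$ above. This is manifestly a perfect pairing between the space of negative tails and the space of polynomials of the matching degree, once one checks that the two linear constraints (one on each side) are exactly dual to each other under this contraction.

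I would then verify well-definedness on cohomology: changing a representative of the $\check{H}^1$ class by a coboundary $\sigma_0-\sigma_1$ with $\sigma_0$ regular on $U_0$ and $\sigma_1$ regular on $U_1$ changes the product by something whose $\gamma^{-1}$-coefficient vanishes — because multiplying a globally bounded-below (regular on $U_0$) tail by a polynomial still has no $\gamma^{-1}$ term of the right type, and symmetrically for the $U_1$ piece — so the class in $\mathbb{C}$ is unchanged; this uses only the structure already exploited in the proof of the previous theorem. Finally, nondegeneracy: given a nonzero class on either side, I would pick the lowest- (resp.\ highest-) degree monomial appearing in its normal form and pair it against the unique dual monomial on the other side, using the nonvanishing of the factorial coefficients to conclude the pairing is nonzero; the two top-degree constraints match up so that the pairing remains nondegenerate on the constrained subspaces rather than acquiring a kernel.

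The main obstacle I expect is bookkeeping: keeping the index ranges, the single linear constraint on each side, and the signs $(-1)^n$, $n!$, $(n+1)!$ consistent so that the constrained spaces really are exact annihilators of each other — in particular checking that the constraint "top coefficient of $g_0$ equals minus top coefficient of $g_3$" is precisely the condition needed for the pairing to kill the one-dimensional ambiguity coming from the constraint on the $\Omega^{n+1|0}$ side, rather than producing an off-by-one mismatch in dimensions. Once that alignment is confirmed, nondegeneracy is immediate from the residue pairing between polynomials and Laurent tails.
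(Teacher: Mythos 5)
Your plan is, in substance, the same computation the paper performs: both reduce the pairing to the explicit product formula
\begin{equation*}
(-1)^n n!\left((f_0g_1+f_1g_0)-(n+1)(f_2g_3+f_3g_2)\right)\psi\,d\gamma\,\delta(d\psi),
\end{equation*}
and then detect the class in $\check{H}^{1}(\mathbb{P}^{1|1},\Omega^{1|1})\cong\mathbb{C}$ by extracting the coefficient of the generator (\ref{generatore}), i.e.\ a residue-type contraction between Laurent tails and polynomials. The difference is in how nondegeneracy is then extracted. You aim for a full dual-basis description of the pairing, which forces you to verify that the one linear constraint on each side (the identification of the coefficient of $\gamma^{-(n+2)}$ in the second row with that of $\gamma^{-(n+1)}$ in the third row on the $\Omega^{n+1|0}$ side, and $a_{n+1}=-b_{n}$ on the $\Omega^{-n|1}$ side) are exact annihilators of each other --- precisely the bookkeeping you flag as the main obstacle and leave unchecked. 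The paper sidesteps this: given an arbitrary nonzero class (\ref{term2}), it isolates a nonzero $g_i$, takes its top coefficient $a_k$, and chooses the single monomial $f_1=C\gamma^{-(k+1)}$ (all other $f_j=0$, with the observation $k+1\le n+2$ guaranteeing admissibility) so that (\ref{molt}) is a nonzero multiple of (\ref{generatore}); one-sided nondegeneracy together with the previously established equality of dimensions ($4n+4$ on both sides) then yields the perfect pairing. Your route proves slightly more (an explicit dual pairing of monomials) at the cost of the constraint-alignment check, which does go through --- the two constrained coefficients enter (\ref{molt}) through $f_1g_0$ and $-(n+1)f_2g_3$ with a combined factor that cannot cancel --- but you should actually carry it out rather than defer it; the paper's single-monomial argument is the leaner way to the same conclusion.
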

\begin{proof} The product between (\ref{term1}) and (\ref{term2}) is cohomologous to the expression
\begin{equation} \label{molt}
(-1)^n n! \left( (f_0g_1 + f_1g_0) - (n+1)( f_2g_3 + f_3 g_2) \right) \psi \, d \gamma \, \delta(d \psi). 
\end{equation}
We have to prove that if (\ref{term2}) is arbitrary and non zero, then we can chose $f_0, \ldots, f_3$ so that the above expression is cohomologous to (\ref{generatore}). We can assume one of the $g_0, \ldots, g_3$ to be non zero. If $g_0 \neq 0$, let $a_k$ be the coefficient of highest degree in $g_0$, hence $k \leq n+1$.  Define 
\[ f_1 = C \gamma^{-{k+1}},\] 
and $f_0, f_2, f_3$ to be zero. Then, for suitably chosen $C \neq 0$ we can easily see that (\ref{molt}) is cohomologous to (\ref{generatore}). Notice also that $k+1 \leq n+2$, so the choice of $f_0, \ldots, f_3$ gives a well defined element of $\check{H}^{1}(\mathbb{P}^{1|1}, \Omega^{n+1 |0})$. Similar arguments hold when $g_1, g_2$ or $g_3$ are not zero.
\end{proof}

A consequence of this lemma is that  $\check{H}^{1}(\mathbb{P}^{1|1}, \Omega^{n+1 |0})$ and $\check{H}^{0}(\mathbb{P}^{1|1}, \Omega^{- n|1})$ are dual to each other. This explains why they have the same dimension. 

\section{Super de Rham Cohomology.} \label{deRham}

We now briefly describe smooth and holomorphic de Rham cohomology with respect to the $d$ differential 
on superforms.

On a fixed complex supermanifold $M^{n|m}$ we denote by $\mathcal{A}^{i|j}$ and $\Omega^{i|j}$ 
respectively the sheaf of smooth and holomorphic superforms of degree $i$  with picture number $j$ and by $\mathbf{A}^{i|j}$  and $\mathbf{\Omega}^{i|j}$ the global sections of these sheaves. As usual for superforms, $i$ can also have negative values. 
On $\mathbf{A}^{*|j}$ (or locally on $\mathcal{A}^{*|j}$) we can define the exterior differential operator $d:
\mathbf{A}^{i|j} \rightarrow\mathbf{A}^{i+1|j}$ which satisfies the following rules:

\begin{itemize}
\item[1.)] $d$ behaves as a differential on functions;

\item[2.)] $d^{2}=0$;

\item[3.)] $d$ commutes with $\delta$ and its derivatives, and so $d (\delta^{(k)}(d \psi))=0$.
\end{itemize}

Similarily, the same operator $d$ is defined on 
$\mathbf{\Omega}^{*|j}$, and behaves as the $\partial$ operator on holomorphic functions (since $\overline \partial$ always vanishes). 

It is easy to verify that, on the intersection of $2$ charts, $d$ commutes
with the pull-back map $\Phi^{\ast}$ expressing the "coordinate changes". This
is due to the particular definition of the pull-back of the differentials, and
it implies that $d$ is well defined and it does not depend on coordinate systems. 

As an example, we prove it on $\mathbb{P}^{1|1}$ in the holomorphic case, leaving to the reader
the easy generalization to every other super-projective space.

\begin{itemize}
\item We know that $\Phi^{\ast}(\widetilde{\gamma})=\frac{1}{\gamma}$, so it's
easy to see that $d\left(  \frac{1}{\gamma}\right)  =\Phi^{\ast}%
d(\widetilde{\gamma})=-\frac{1}{\gamma^{2}}d\gamma$.

\item We know that $\Phi^{\ast}(\widetilde{\psi})=\frac{\psi}{\gamma}$, so
it's easy to see that $d\left(  \frac{\psi}{\gamma}\right)  =\Phi^{\ast
}d(\widetilde{\psi})=-\frac{1}{\gamma^{2}}d\gamma\,\psi+\frac{d\psi}{\gamma}$.

\item We know that $\Phi^{\ast}\delta(d\widetilde{\psi})=\gamma\delta\left(
d\psi\right)  -d\gamma\,\psi\delta^{\prime}(d\psi)$. Then, $\Phi^{\ast
}d(\delta(d\widetilde{\psi}))=0$.

But, $d(\Phi^{\ast}\delta(d\widetilde{\psi}))=d(\gamma\delta\left(
d\psi\right)  -d\gamma\,\psi\delta^{\prime}(d\psi))=d\gamma\,\delta\left(
d\psi\right)  +d\gamma\,d\psi\,\delta^{\prime}(d\psi))=0$.
\end{itemize}

Now $( \mathbf{A}^{*|j}(M), d)$ and $( \mathbf{\Omega}^{*|j}(M), d)$ define complexes, whose cohomology groups we call
respectively the smooth and holomorphic super de Rham cohomology groups:

\begin{definition}
If $Z^{i|j}$ is the set of the $d$-closed forms in $\mathbf{A}^{i|j}$, and
$B^{i|j}= d \mathbf{A}^{i-1|j}$. Then, the $i|j$-th smooth de Rham cohomology group is
the quotient of additive groups:
\begin{equation}
H_{DR}^{i|j}(M^{n|m})=\frac{Z^{i|j}}{B^{i|j}}. 
\end{equation}
Similarily we define the holomorphic de Rham cohomology groups which we denote by $H_{DR}^{i|j}(M^{n|m}, hol)$
\end{definition}

We now calculate the holomorphic super de Rham cohomology of $\mathbb{C}^{m|n}$.

Let's call $\{\gamma_{1}, \gamma_{2},...,\gamma_{m}\}$ the even coordinates
and $\{ \psi_{1}, \psi_{2}, ..., \psi_{n}\}$ the odd coordinates of
$\mathbb{C}^{m|n}$.

Clearly the following forms are closed:

\begin{itemize}
\item[a)] $1$;

\item[b)] $\{d\gamma_{i}\}$, $i\in\{1;2;...,m\}$;

\item[c)] $\{d\psi_{j}\}$, $j\in\{1;2;...;n\}$;

\item[d)] $\{d\gamma_{h}\cdot\psi_{k}+\gamma_{h}d\psi_{k}=d(\gamma_{h}%
\cdot\psi_{k})\}$, $h\in\{1;2;...,m\}$, $k\in\{1;2;...,n\}$;

\item[e)] $\{\delta^{(k)}(d\psi_{a})\}$, $a\in\{1;2;...;n\}$ and $k \in \mathbb{N}$;

\item[f)] $\{\psi_{b}\delta(d\psi_{b})\}$, $b\in\{1;2;...,n\}$.
\end{itemize}
All other closed forms are products and linear combinations of these with coefficients some holomorphic 
functions in the even coordinates.  Observe that  $\{\psi_{b}\delta
(d\psi_{b})\}$, with $b\in\{1;2;...,n\}$ are not exact. A calculation shows that  the holomorphic super de Rham cohomology
$H^{i|j}(\mathbb{C}^{m|n}, hol)$ is zero whenever $i > 0$, it is generated by $1$
when $i =j=0$, by $\{\psi_{b}\delta(d\psi_{b})\}$ when $i=0$ and $j =1$ and by
their $j$-th exterior products when $i=0$ and $j \geq2$. Similarily we can compute the 
smooth de Rham cohomology of  $\mathbb{R}^{m|n}$.

\begin{remark}
In particular, we see that the super-vector space $\mathbb{C}^{m|n}$ (or $\mathbb{R}^{m|n}$) 
does not satisfy the Poincar\`{e} lemma, since its de Rham cohomology is not trivial.
The forms $\{\psi_{i}\delta(d\psi_{i})\}$ can be seen as even generators of
the "odd component" of the cohomology.
\end{remark}

As an example we compute the holomorphic de Rham cohomology of $\mathbb{P}^{1|1}$. We have:

\begin{theorem} For $n \geq 0$, the holomorphic de Rham cohomology groups of $\mathbb{P}^{1|1}$ are as follows:

\[
H_{DR}^{n|0}(\mathbb{P}^{1|1}, hol) \cong \begin{cases}
                                                                                       0, \quad n> 0, \\
                                                                                       \mathbb{C}, \quad n=0.
                                                                                  \end{cases} \] 
\[
H_{DR}^{- n|1}(\mathbb{P}^{1|1}, hol) \cong \begin{cases}
                                                                                       0, \quad n> 0, \\
                                                                                       \mathbb{C}, \quad n=0.
                                                                                  \end{cases} \]
\[
H_{DR}^{1|1}(\mathbb{P}^{1|1}, hol) \cong 0. \]
\end{theorem}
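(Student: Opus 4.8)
The plan is to use the \v Cech--de Rham machinery already available in the excerpt. Since the covering $\{U_0, U_1\}$ is acyclic for every sheaf $\Omega^{i|j}$ (the acyclicity theorem), there is a \v Cech--de Rham double complex whose total cohomology computes $H^{\bullet}_{DR}(\mathbb{P}^{1|1}, hol)$, and whose two spectral sequences degenerate quickly because the covering has only two open sets (so \v Cech degree is $0$ or $1$). Concretely, for each fixed picture number $j$ and each vertical degree I would write out the short exact sequence relating the \v Cech cohomology groups $\check H^{0}$ and $\check H^{1}$ of the sheaves in the de Rham complex $\Omega^{\bullet|j}$, and read off $H^{n|j}_{DR}$ from the resulting long exact sequence (equivalently, from the $E_2$ page of the spectral sequence with $E_1^{p,q} = \check H^{p}(\mathbb{P}^{1|1}, \Omega^{q|j})$).

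\textbf{Key steps.}
First I would recall from the local computation of $H^{\bullet}_{DR}(\mathbb{C}^{m|n}, hol)$ that on each chart $U_0 \cong U_1 \cong \mathbb{C}^{1|1}$ the holomorphic de Rham cohomology in picture $0$ is $\mathbb{C}$ in degree $0$ and vanishes otherwise, while in picture $1$ it is $\mathbb{C}$ in degree $0$ (generated by $\psi \delta(d\psi)$) and vanishes otherwise; on the intersection $U_0 \cap U_1 \cong \mathbb{C}^{*|1}$ the same local analysis applies. This tells me the $E_1$ page of the \v Cech--de Rham spectral sequence. Then I would feed in the two preceding theorems, which give the \v Cech cohomology groups: $\check H^{0}(\Omega^{n|0}) = \mathbb{C}$ for $n=0$ and $0$ for $n>0$; $\check H^{1}(\Omega^{n|0}) \cong \mathbb{C}^{4n}$; $\check H^{0}(\Omega^{-n|1}) \cong \mathbb{C}^{4n+4}$; $\check H^{1}(\Omega^{-n|1}) = 0$; $\check H^{0}(\Omega^{1|1}) = 0$; $\check H^{1}(\Omega^{1|1}) = \mathbb{C}$. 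The main computation is then to track, for each target degree, the $d_1$ (and if necessary $d_2$) differentials. For the picture-$0$ row ending in degree $n$, the relevant $E_1$ entries are $\check H^{1}(\Omega^{n-1|0}) = \mathbb{C}^{4(n-1)}$ in \v Cech degree $1$ and $\check H^{0}(\Omega^{n|0})$ in \v Cech degree $0$, together with $\check H^{1}(\Omega^{n|0}) = \mathbb{C}^{4n}$; the differential $d: \mathbf{\Omega}^{n-1|0}(U_0 \cap U_1) \to \mathbf{\Omega}^{n|0}(U_0 \cap U_1)$ between intersection sections must be shown to be surjective onto the \v Cech-nontrivial part, killing all of $\check H^{1}(\Omega^{n|0})$ for $n \geq 1$ and leaving only the constant in degree $0$. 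For picture $1$, I would check that $d: \check H^{0}(\Omega^{-1|1}) \to \check H^{0}(\Omega^{0|1})$ has a one-dimensional cokernel generated by $\psi\delta(d\psi)$ — but since the claim is $H^{0|1}_{DR} = \mathbb{C}$ only when $n=0$, i.e.\ $H^{-n|1}_{DR} = 0$ for $n>0$, I must verify that for $n \geq 1$ the map $d: \mathbf{\Omega}^{-n-1|1} \to \mathbf{\Omega}^{-n|1}$ together with the \v Cech differential exhausts everything, and that the \v Cech $H^1$ contributions $\check H^{1}(\Omega^{-n|1}) = 0$ pose no obstruction. Finally the $1|1$ case: $\check H^{1}(\Omega^{1|1}) = \mathbb{C}$ generated by $\psi d\gamma \delta(d\psi)/\gamma$, and I would show this class is hit by $d$ applied to a \v Cech $0$-cochain with values in $\Omega^{0|1}$, so that $H^{1|1}_{DR} = 0$; simultaneously $\check H^{0}(\Omega^{1|1}) = 0$ contributes nothing.

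\textbf{Main obstacle.}
The delicate point is the bookkeeping of the $d_1$ differentials on the $E_1$ page: I must show that the de Rham differential restricted to sections over $U_0 \cap U_1$ is surjective onto the "singular parts" that represent \v Cech $H^1$ classes, i.e.\ that $d (\gamma^{-k} \psi\, \cdots)$ and similar monomials generate all the $4n$-dimensional space $\check H^{1}(\Omega^{n|0})$ after quotienting, and dually that the cokernel of $d$ on global-section-representable pieces is exactly the stated one-dimensional (or zero-dimensional) space. This amounts to an explicit but somewhat intricate linear-algebra check on Laurent monomials in $\gamma$ with coefficients $1, \psi$ and form-parts $(d\gamma)^i (d\psi)^j$ or $(d\gamma)^i \delta^{(j)}(d\psi)$, using $d(\gamma^m \psi) = m\gamma^{m-1} d\gamma\,\psi + \gamma^m d\psi$ and $d(\delta^{(k)}(d\psi)) = 0$. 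Once the surjectivity/cokernel statements are established the three isomorphisms follow immediately by counting dimensions in the resulting exact sequences; I would organize the verification as three short itemized computations, one for $\Omega^{n|0}$, one for $\Omega^{-n|1}$, and one for $\Omega^{1|1}$, mirroring the structure of the two \v Cech theorems above.
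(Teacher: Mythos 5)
Your proposal rests on the premise that the total cohomology of the \v Cech--de Rham double complex $K^{p,q}=\mathcal{C}^p(\{U_0,U_1\},\Omega^{q|j})$ computes $H^{\bullet|j}_{DR}(\mathbb{P}^{1|1},hol)$. That identification is false in the holomorphic category, and this is the central gap. It would require the sheaves $\Omega^{q|j}$ themselves to be acyclic (in the smooth case this is supplied by fineness and partitions of unity, as in the paper's later \v Cech--de Rham theorem); here they are not, since $\check H^1(\mathbb{P}^{1|1},\Omega^{n|0})\cong\mathbb{C}^{4n}\neq 0$. The acyclicity theorem you invoke says only that the \emph{covering} is Leray for each sheaf, which is a different statement. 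The de Rham cohomology as defined in the paper is the cohomology of the complex of \emph{global sections}, i.e.\ exactly the $E_2^{0,q}$ term of your second spectral sequence; the total cohomology is an extension of this by $E_2^{1,q-1}$, and the two genuinely differ here. Concretely, the class of $\gamma^{-1}d\gamma$ in $\check H^1(\Omega^{1|0})$ is $d$-closed and is not a $d$-image from $\check H^1(\Omega^{0|0})=0$, so $E_2^{1,1}\neq 0$ and the total cohomology in degree $2$ is nonzero, while $H^{2|0}_{DR}=0$. Two further steps would fail as stated: the map $d\colon\check H^1(\Omega^{n-1|0})\to\check H^1(\Omega^{n|0})$ cannot be surjective for dimension reasons ($4(n-1)<4n$); and the degeneration of the other spectral sequence also breaks down because $U_0\cap U_1\cong\mathbb{C}^{*}$ has nontrivial holomorphic de Rham cohomology in degree $1$ (generated by $d\gamma/\gamma$), so the local Poincar\'e-type computation for $\mathbb{C}^{m|n}$ does not apply on the intersection.

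The paper's argument is far more direct and avoids all of this: Theorem \ref{h0} already lists the global sections of each $\Omega^{i|j}$ explicitly, and $H^{i|j}_{DR}(hol)$ is by definition closed global forms modulo exact global forms. For $\Omega^{n|0}$ with $n>0$ and for $\Omega^{1|1}$ the space of global sections is zero, so those cohomology groups vanish with no computation; $H^{0|0}_{DR}=\mathbb{C}$ is the constants; and the only case requiring work is picture $1$, where one applies $d$ to the $(4n+8)$-dimensional space of global sections of $\Omega^{-n-1|1}$ and checks which of the $4n+4$-dimensional space of sections of $\Omega^{-n|1}$ are closed, finding everything exact for $n>0$ and a one-dimensional quotient generated by $\psi\,\delta(d\psi)$ for $n=0$. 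The $\check H^1$ groups play no role. If you want to keep a spectral-sequence formulation, you must restrict attention to the $p=0$ column, at which point the argument collapses to the paper's.
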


\begin{proof} We have given explicit descriptions of global sections of the sheaves $\Omega^{i|j}$ in Theorem ÷\ref{h0} and 
therefore it is a rather straightforward computation to determine which forms are closed and which are exact in terms of the coefficients describing the forms (see formulas (\ref{forms_n0}) and (\ref{forms_n1})). We leave the details to the reader. Notice that 
$H_{DR}^{0|1}(\mathbb{P}^{1|1}, hol)$ is generated by the closed form $\psi \delta(d\psi)$ which is globally defined on $\mathbb{P}^{1|1}$.
\end{proof}

Now consider a general smooth super manifold $M^{n|m}$. On $M$ we can define the pre-sheaf which associates to every open subset $U \subset M$ the smooth super de Rham $i|j$-cohomology group of $U^{n|m}$ and we denote
the corresponding sheaf by $\mathcal{H}^{i|j}$. If follows from the above
remark that $\mathcal{H}^{i|j}$ is the constant $\mathbb{C}$-sheaf when
$i,j=0$, a non zero sheaf when $i=0$ and $j > 0$ and the zero sheaf
otherwise. It makes therefore sense to consider the \v{C}ech cohomology groups
which we denote by $\check{H}^{p}(M^{n|m}, \mathcal{H}^{i|j})$ (which are
zero when $i > 0$). Recall that a \textbf{good cover} is an open covering
$U_{\alpha}$ of $M$ such that every non-empty finite intersection
$U_{\alpha_{0}}\cap U_{\alpha_{1}}\cap...\cap U_{\alpha_{p}}$ is diffeomorphic
to $\mathbb{R}^{n}$. We can now prove a generalization of the classical
equivalence of \v{C}ech and De Rham cohomology

\begin{theorem}
Given a supermanifold $M^{n|m}$, for $i \geq 0$ we have the following isomorphism
\begin{equation}
H_{DR}^{i|j}(M^{n|m}) \cong \check{H}^{i}(M^{n|m}, \mathcal{H}^{0|j})
\end{equation}

\end{theorem}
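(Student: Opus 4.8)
The plan is to adapt the classical double complex (Čech–de Rham) argument of Weil to the present setting, with the picture number $j$ held fixed throughout. Fix a good cover $\mathcal{U}=\{U_\alpha\}$ of $M^{n|m}$ — meaning each finite intersection has reduced manifold diffeomorphic to $\mathbb{R}^n$, so by the computation of the (smooth) de Rham cohomology of $\mathbb{R}^{n|m}$ carried out above, on each such intersection the complex $(\mathbf{A}^{\bullet|j}, d)$ is a resolution of the space $\mathcal{H}^{0|j}$ (which, recall, is the $\mathbb{C}$-constant sheaf for $j=0$ and a fixed finite-dimensional space of ``$\psi\delta$''-type generators for $j>0$, and is zero in positive form degree). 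The key structural input is precisely the Poincaré-type lemma extracted from the de Rham computation of $\mathbb{C}^{m|n}$/$\mathbb{R}^{m|n}$: for $i>0$, every $d$-closed smooth $i|j$-form on a $\mathbb{R}^{n|m}$-chart is $d$-exact, and the closed $0|j$-forms are exactly the locally constant sections of $\mathcal{H}^{0|j}$.

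Next I would set up the double complex $K^{p,q}=C^p(\mathcal{U},\mathcal{A}^{q|j})$ with the Čech differential $\delta$ and the de Rham differential $d$ (both of which, as noted in the excerpt, commute with the morphism pull-backs, so they are well defined and anticommute after the usual sign adjustment). Two spectral-sequence (or zig-zag) computations of the cohomology of the total complex then finish the argument. Filtering one way: since each $\mathcal{A}^{q|j}$ is a fine sheaf (it is a sheaf of modules over the smooth functions, which admit partitions of unity on the reduced manifold, and multiplication by such partition functions is a perfectly good sheaf endomorphism here), the Čech cohomology $\check{H}^p(\mathcal{U},\mathcal{A}^{q|j})$ vanishes for $p>0$, so the total cohomology collapses to $H^i$ of the complex of global forms $\mathbf{A}^{\bullet|j}(M)$, i.e.\ to $H_{DR}^{i|j}(M^{n|m})$. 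Filtering the other way: for fixed $p$, the columns compute, by the Poincaré lemma above applied on each intersection $U_{\alpha_0\cdots\alpha_p}$, the Čech cochain groups $C^p(\mathcal{U},\mathcal{H}^{0|j})$ concentrated in de Rham degree $0$ (all higher de Rham cohomology of the charts vanishes); taking then the $\delta$-cohomology gives $\check{H}^p(\mathcal{U},\mathcal{H}^{0|j})$ in total degree $p$. Comparing the two gives $H_{DR}^{i|j}(M^{n|m})\cong\check{H}^i(\mathcal{U},\mathcal{H}^{0|j})$, and passing to the direct limit over good covers (or invoking that good covers are cofinal and compute sheaf cohomology) yields $\check{H}^i(M^{n|m},\mathcal{H}^{0|j})$.

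I expect the main obstacle to be the Poincaré lemma step, i.e.\ verifying carefully that on a chart $U^{n|m}\cong\mathbb{R}^{n|m}$ the augmented complex $0\to\mathcal{H}^{0|j}\to\mathcal{A}^{0|j}\xrightarrow{d}\mathcal{A}^{1|j}\xrightarrow{d}\cdots$ is exact in positive degrees — this requires a contracting homotopy that handles simultaneously the even differentials $dx^i$, the commuting odd differentials $d\psi^a$, and the distributional generators $\delta^{(k)}(d\psi^a)$ (including the negative-degree forms), and one must check that the ``extra'' cohomology classes of type $\psi_b\delta(d\psi_b)$ and their products are genuinely the only obstructions, exactly as recorded in the de Rham computation of $\mathbb{C}^{m|n}$ above. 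Granting that local statement, the rest is the standard tic-tac-toe of the Čech–de Rham double complex, together with the routine check that $\mathcal{A}^{i|j}$ is fine; these I would state and leave the bookkeeping to the reader. One should also note the hypothesis $i\ge 0$ in the theorem: for $i<0$ the sheaf $\mathcal{H}^{i|j}$ is zero so there is nothing to prove on the Čech side, but the de Rham groups $H_{DR}^{i|j}$ with $i<0$ need not vanish (cf.\ the $\mathbb{P}^{1|1}$ computation), so the isomorphism is genuinely asserted only in nonnegative form degree, and the proof above uses $i\ge 0$ precisely where it invokes vanishing of higher de Rham cohomology of the contractible charts.
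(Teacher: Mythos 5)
Your proposal is correct and follows essentially the same route as the paper: the \v{C}ech--de Rham double complex $K^{p,q}=C^p(\mathcal{U},\mathcal{A}^{q|j})$ over a good cover, the two spectral sequences, fineness of $\mathcal{A}^{q|j}$ via partitions of unity on the reduced manifold, and the local computation of $H_{DR}^{\bullet|j}(\mathbb{R}^{n|m})$ as the key Poincar\'e-type input. Your added remarks on the direct limit over good covers and on why the hypothesis $i\ge 0$ is needed are sensible refinements of the same argument, not a different one.
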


\begin{proof}
For the proof we can use the same method used in \cite{BottTu} for the
classical equivalence of \v{C}ech and De Rham cohomology. Let us fix a good
cover $\underline{\mathcal{U}} = \{ U_{\alpha} \}$ of $M$. For integers $p,q
\geq0$, let us set
\begin{equation}
K^{p,q} = \mathcal{C}^{p}( \mathcal{A}^{q|j}, \underline{\mathcal{U}} ),
\end{equation}
where the righthand side denotes the usual $p$-cochains of the sheaf
$\mathcal{A}^{q|j}$, with respect to the covering $\underline{\mathcal{U}}$.
Then we can form the double complex $(K, d, \delta)$, where $K = \oplus_{p,q
\geq0} K^{p,q}$ and the operators are the usual exterior differential operator
$d$ and the \v{C}ech co-boundary operator $\delta$. From this double complex
one can construct two spectral sequences $(E^{p,q}_{r}, d_{r})$ and
$(E^{^{\prime}p,q}_{r}, d_{r})$ both converging to the total cohomology
$H_{D}(K)$ of the double complex (see \cite{BottTu}). We have that
\begin{equation}
E^{p,q}_{2} = \check{H}^{p}(H_{DR}^{q|j}( \mathcal{A}^{q|j}),\underline
{\mathcal{U}}) = \check{H}^{p}(M^{n|m}, \mathcal{H}^{q|j}).
\end{equation}
In particular $E^{p,q}_{2} = 0$ when $q > 0$, therefore $(E^{p,q}_{r}, d_{r})$
stabilizes at $r =2$. On the other hand we have
\begin{equation}
E^{^{\prime}p,q}_{2} = H_{DR}^{q} ( \check{H}^{p}( \mathcal{A}^{q|j}%
,\underline{\mathcal{U}})).
\end{equation}
We can easily see that the sheaves are fine i.e. that
\begin{equation}
\check{H}^{0}( \mathcal{A}^{q|j},\underline{\mathcal{U}}) = \mathbf{A}^{q|j}
\end{equation}
and
\begin{equation}
\check{H}^{p}( \mathcal{A}^{q|j},\underline{\mathcal{U}}) = 0 \ \ \text{when}
\ \ p>0.
\end{equation}
The latter identity can be proved using standard partitions of unity relative
to the covering $\underline{\mathcal{U}}$ of the underlying smooth manifold
$M$. Therefore we conclude that $(E^{^{\prime}p,q}_{r}, d_{r})$ also
stabilizes at $r=2$ and $E^{^{\prime}p,q}_{2} = 0$ when $p>0$ and
\begin{equation}
E^{^{\prime}0,q}_{2} = H_{DR}^{q|j}(M^{n|m}) .
\end{equation}
The theorem is then proved by using the fact that the two spectral sequences
must converge to the same thing and therefore
\begin{equation}
H_{DR}^{q|j}(M^{n|m}) = E^{^{\prime}0,q}_{2} \cong E^{q,0} = \check{H}^{q}(M^{n|m}, \mathcal{H}^{0|j}).
\end{equation}

\end{proof}

It may happen the sheaf $\mathcal{H}^{0|j}$ is actually a constant sheaf, for
instance on projective superspaces $\mathbb{P}^{n|m}$ the forms $\{\psi
_{i}\delta(d\psi_{i})\}$ are globally defined. In this case, as a corollary of
the above result, we obtain a sort of "Kunneth formula" for the super de Rham
cohomology on supermanifolds.

\begin{corollary}
Let $M^{n|m}$ be a super-manifold, such that $\mathcal{H}^{0|j}$ is a constant
sheaf, (e.g. when the locally defined forms $\{\psi_{i}\delta(d\psi_{i})\}$
extend globally). Then the de Rham cohomology of $M^{n|m}$ is:
\begin{equation}
H_{DR}^{*|j}(M^{n|m})=H_{DR}^{\ast}(M)\otimes\mathcal{H}^{0|j}.
\end{equation}

\end{corollary}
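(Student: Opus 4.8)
The plan is to read off the Corollary from the \v{C}ech--de Rham isomorphism of the previous Theorem, $H_{DR}^{i|j}(M^{n|m}) \cong \check{H}^{i}(M^{n|m},\mathcal{H}^{0|j})$, reducing everything to the classical \v{C}ech--de Rham theorem for the underlying smooth manifold $M$ together with the observation that, under the hypothesis, $\mathcal{H}^{0|j}$ is a constant sheaf with a fixed, finite-dimensional fibre.

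First I would set notation. Let $W_{j}$ denote the stalk of $\mathcal{H}^{0|j}$; by the local computation of $H_{DR}^{0|j}(\mathbb{R}^{m|n})$ recalled before the Remark, $W_{j}$ is a finite-dimensional complex vector space, with a basis indexed by the $j$-fold exterior products of the generators $\psi_{i}\delta(d\psi_{i})$. The hypothesis of the Corollary is exactly that the sheaf $\mathcal{H}^{0|j}$ is the constant sheaf $\underline{W_{j}}$ on $M$; over a connected base its global sections are canonically $W_{j}$, and it is this vector space that is written $\mathcal{H}^{0|j}$ on the right-hand side of the statement.

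Next comes the only algebraic point. For a constant sheaf $\underline{W_{j}}$ attached to a $\mathbb{C}$-vector space $W_{j}$ one has a natural isomorphism $\check{H}^{p}(M,\underline{W_{j}}) \cong \check{H}^{p}(M,\underline{\mathbb{C}})\otimes_{\mathbb{C}} W_{j}$: choosing a good cover $\underline{\mathcal{U}}$ of $M$, the \v{C}ech cochain complex $C^{\bullet}(\underline{\mathcal{U}},\underline{W_{j}})$ is canonically $C^{\bullet}(\underline{\mathcal{U}},\underline{\mathbb{C}})\otimes_{\mathbb{C}} W_{j}$, and since $-\otimes_{\mathbb{C}} W_{j}$ is exact (we are over a field) it commutes with passage to cohomology. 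Combining this with the classical \v{C}ech--de Rham theorem $\check{H}^{p}(M,\underline{\mathbb{C}})\cong H_{DR}^{p}(M)$ gives $\check{H}^{p}(M^{n|m},\mathcal{H}^{0|j}) \cong H_{DR}^{p}(M)\otimes_{\mathbb{C}} W_{j}$.

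Finally I would assemble the pieces: by the previous Theorem,
\[ H_{DR}^{p|j}(M^{n|m}) \cong \check{H}^{p}(M^{n|m},\mathcal{H}^{0|j}) \cong H_{DR}^{p}(M)\otimes_{\mathbb{C}} W_{j}, \]
for every $p\geq 0$; for $p<0$ both sides vanish, on the left by the same spectral-sequence argument used to prove the previous Theorem (or by direct inspection), and on the right since $H_{DR}^{p}(M)=0$ there. Letting $p$ run over all degrees and re-identifying $W_{j}$ with the constant sheaf $\mathcal{H}^{0|j}$ yields $H_{DR}^{*|j}(M^{n|m}) = H_{DR}^{*}(M)\otimes\mathcal{H}^{0|j}$, as claimed. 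I expect the genuine mathematical content to sit entirely in the previously proved Theorem; the residual difficulties here are bookkeeping --- keeping track of the degree conventions (the generators $\psi_{i}\delta(d\psi_{i})$ have degree $0$, so the "base" form degree on $M$ matches the total superform degree $i$), and checking that the constant-sheaf identification is compatible with the isomorphism of the previous Theorem, which it is, since that isomorphism arises from the double complex whose $E_{2}$-page is already $\check{H}^{p}(M,\mathcal{H}^{q|j})$.
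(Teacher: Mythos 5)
Your argument is correct, but it is not the route the paper takes. You treat the statement as a genuine corollary of the \v{C}ech--de Rham theorem just proved: identify $\mathcal{H}^{0|j}$ with the constant sheaf $\underline{W_j}$ on a finite-dimensional fibre $W_j$, use exactness of $-\otimes_{\mathbb{C}}W_j$ to pull the fibre out of \v{C}ech cohomology, and invoke the classical isomorphism $\check{H}^{p}(M,\underline{\mathbb{C}})\cong H_{DR}^{p}(M)$. The paper, despite labelling the statement a corollary, proves it from scratch: it defines the multiplication map $\psi: H_{DR}^{\ast}(M)\otimes\mathcal{H}^{0|j}\to H_{DR}^{\ast|j}(M^{n|m})$, checks it is well defined on cohomology classes, and then runs an induction on the number of sets in a good cover via Mayer--Vietoris, a commutative-diagram check for the connecting homomorphism $d^{\ast}$ (using a partition of unity and closedness of $\phi$), and the Five Lemma --- essentially the Bott--Tu proof of Leray--Hirsch/K\"unneth. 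Each approach buys something: yours is shorter and actually uses the preceding theorem, while the paper's exhibits the isomorphism as the concrete multiplication map and works uniformly in every degree, since Mayer--Vietoris and the base case $\mathbb{R}^{n|m}$ are available degree by degree.

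Two small points you should tighten. First, the preceding theorem is stated only for $i\geq 0$, and its double-complex proof is a first-quadrant argument; your disposal of the negative-degree part of $H_{DR}^{\ast|j}$ ``by the same spectral-sequence argument'' is not automatic, because for $j>0$ the column complexes $\mathcal{A}^{\bullet|j}$ are unbounded below and convergence needs a separate word (a Mayer--Vietoris or direct local-to-global argument, as in the paper, settles it). Second, as you yourself note, your chain of isomorphisms is a priori only an abstract identification of vector spaces; to land on the statement with the multiplication map as the isomorphism one must check compatibility with the edge maps of the spectral sequence, which is routine but should be said. Neither point is a gap in substance.
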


\begin{proof}
The map $\psi:H_{DR}^{\ast}(M)\otimes\mathcal{H}\longrightarrow H_{DR}^{\ast
}(M^{n|m})$ given by multiplication is a map in cohomology. It is easy to show
that, if $\gamma$ is an element of $H_{DR}^{\ast}(M)$ and $\omega$ is an
element of $\mathcal{H}$, then $\gamma\omega$ is an element of $H_{DR}^{\ast
}(M^{n|m})$. Moreover, if $\gamma$ and $\gamma^{\prime}$ are cohomologous in
$H_{DR}^{\ast}(M)$, then $\gamma\omega$ and $\gamma^{\prime}\omega$ are
cohomologous in $H_{DR}^{\ast}(M^{n|m})$: if $\gamma-\gamma^{\prime}=df$, then
$\gamma\omega-\gamma^{\prime}\omega=d(f\omega)$, since $d\omega=0$. Now, we
proceed by induction on the number of open sets of the good cover of $M$.
Obviously, if this number is equal to 1, then $M=\mathbb{R}^{n}$, and the
thesis is true for the study we have performed above. We have to prove the
truth of the thesis for an integer $s$, knowing that it is true for $s-1$. So,
let $M$ be covered by $s$ open sets forming a good cover. Then, we can call
$U$ one of them, and $V$ the union of the remaining ones. We know that the
thesis is true on $U$; $V$ and $U\cap V$. We will call $U^{m|n}$ and $V^{m|n}$
the open sets $U$ and $V$ endowed with the corresponding graded sheaves. Let
$k;p$ be two integers; by the usual Mayer-Vietoris sequence,%
\begin{equation}
...\longrightarrow H^{p}(U\cup V)\longrightarrow H^{p}(U)\oplus H^{p}%
(V)\longrightarrow H^{p}(U\cap V)\longrightarrow...
\end{equation}

If $\mathcal{H}^{q}$ are the elements of $\mathcal{H}$ of degree $\cdot|q$, we
have the following exact sequence:%
\begin{equation}
...\longrightarrow H^{p}(U\cup V)\otimes\mathcal{H}^{q}\longrightarrow
(H^{p}(U)\otimes\mathcal{H}^{q})\oplus(H^{p}(V)\otimes\mathcal{H}%
^{q})\longrightarrow(H^{p}(U\cap V)\otimes\mathcal{H}^{q})\longrightarrow...
\end{equation}

Summing up, we find that the following sequence is exact:%
\begin{align*}
... &  \longrightarrow\bigoplus_{p+q=k}H^{p}(U\cup V)\otimes\mathcal{H}^{q}\\
&  \longrightarrow\bigoplus_{p+q=k}(H^{p}(U)\otimes\mathcal{H}^{q}%
)\oplus(H^{p}(V)\otimes\mathcal{H}^{q})\\
&  \longrightarrow\bigoplus_{p+q=k}(H^{p}(U\cap V)\otimes\mathcal{H}%
^{q})\longrightarrow...
\end{align*}
where the sum is performed over $p,q$.

The following diagram is commutative:%
\begin{small}

\begin{align*}
\hspace{-1.5cm}\bigoplus_{p+q=k}H^{p}(U\cup V) &  \otimes\mathcal{H}%
^{q}\rightarrow\bigoplus_{p+q=k}(H^{p}(U)\otimes\mathcal{H}^{q}%
)\oplus(H^{p}(V)\otimes\mathcal{H}^{q})\rightarrow\bigoplus_{p+q=k}%
(H^{p}(U\cap V)\otimes\mathcal{H}^{q})\\
\, &  \downarrow\psi\,\qquad\qquad\qquad\qquad\qquad\qquad\downarrow\psi
\qquad\qquad\qquad\qquad\qquad\qquad\downarrow\psi\\
H^{k}( &  M^{n|m})\qquad\qquad\rightarrow\qquad H^{k}(U^{n|m})\oplus
H^{k}(V^{n|m})\qquad\rightarrow\qquad H^{k}((U\cap V)^{n|m})
\end{align*}

\end{small}

The commutativity is clear except possibly for the square:%
\begin{align*}
\oplus(H^{p}(U\cap V)\otimes\mathcal{H}^{q}) &  \longrightarrow^{d^{\ast}%
}\oplus H^{p+1}(U\cup V)\otimes\mathcal{H}^{q}\\
&  \downarrow\psi\text{ \ \ \ \ \ \ \ \ \ \ \ \ \ \ \ \ \ \ \ \ \ }%
\downarrow\psi\\
H^{k}((U\cap V)^{n|m})\qquad &  \longrightarrow^{d^{\ast}}\qquad\qquad
H^{k+1}(M^{n|m})
\end{align*}

Let $\omega\otimes\phi$ be in $(H^{p}(U\cap V)\otimes\mathcal{H}^{q})$. Then,
$\psi d^{\ast}(\omega\otimes\phi)=(d^{\ast}\omega)\cdot\phi$ and $d^{\ast}%
\psi(\omega\otimes\phi)=d^{\ast}(\omega\phi)$.

If $\{\rho_{U};\rho_{V}\}$ is a partition of unity subordinate to $\{U;V\}$,
then $d^{\ast}\omega=-d(\rho_{V}\omega)$ and $d^{\ast}(\omega\phi)=-d(\rho
_{V}\omega\phi)$ on $U$, while $d^{\ast}\omega=d(\rho_{U}\omega)$ and
$d^{\ast}(\omega\phi)=d(\rho_{U}\omega\phi)$ on $V$. Note that $-d(\rho
_{U}\omega\phi)=d(\rho_{V}\omega\phi)$ on $U\cap V$, since both $\omega$ and
$\phi$ are closed. So, $d^{\ast}(\omega\phi)$ is a global section of the sheaf
of $M^{n|m}$.

By these relations, it's easy to see that the square is commutative:

$d^{\ast}\psi(\omega\otimes\phi)=d^{\ast}(\omega\phi)=d(\rho_{U}\omega
\phi)=(d\rho_{U}\omega)\phi=(d^{\ast}\omega)\cdot\phi=\psi d^{\ast}%
(\omega\otimes\phi)$, since $\phi$ is closed.

By the Five Lemma, if the theorem is true for $U^{n|m}$, $V^{n|m}$ and $(U\cap
V)^{n|m}$ then it holds also for $M^{n|m}$, by induction.
\end{proof}

\end{document}